\newtheorem{problem}{\textbf{Problem}}
\newtheorem{theorem}{\textbf{Theorem}}
\newtheorem{definition}{\textbf{Definition}}
\newtheorem{lemma}{\textbf{Lemma}}
\newtheorem{corollary}{\textbf{Corollary}}
\newtheorem{alg}{\textbf{Algorithm}}
\newtheorem{step}{\textbf{Step}}
\newtheorem{remark}{\textbf{Remark}}
\newtheorem{notation}{\textbf{Notation}}
\begin{document}
\markboth{IEEE Systems Journal} {Li et al: MDP in C-RAN\ldots}

\title{Resource Allocation Optimization for Delay-Sensitive Traffic in Fronthaul Constrained Cloud Radio Access Networks}

\author{Jian~Li,~Mugen~Peng, ~\IEEEmembership{Senior Member,~IEEE},
~Aolin~Cheng,~Yuling~Yu,~Chonggang~Wang, ~\IEEEmembership{Senior
Member,~IEEE}
\thanks{Manuscript received June 18, 2014; revised September 12, 2014; accepted October 15, 2014. The editor coordinating the review of this paper and approving it for publication was Prof. Vincenzo Piuri.}
\thanks{Jian~Li (e-mail: {\tt lijian.wspn@gmail.com}), Mugen~Peng (e-mail: {\tt pmg@bupt.edu.cn}), Aolin~Cheng (e-mail: {\tt ahchengaolin@gmail.com}), ~Yuling~Yu (e-mail: {\tt aliceyu1215@gmail.com}) are with the Key Laboratory of Universal
Wireless Communications for Ministry of Education, Beijing
University of Posts and Telecommunications, China. Chonggang~Wang
(e-mail: {\tt cgwang@ieee.org}) is with the InterDigital
Communications, King of Prussia, PA, USA.}
\thanks{This work was supported in part by the National Natural Science Foundation of China (Grant No. 61222103, No.61361166005), the National High Technology Research and Development Program of China (Grant No. 2014AA01A701), the State Major Science and Technology Special Projects (Grant No. 2013ZX03001001), and the Beijing Natural Science Foundation (Grant No. 4131003).}
}

\maketitle
\begin{abstract}
The cloud radio access network (C-RAN) provides high spectral and
energy efficiency performances, low expenditures and intelligent
centralized system structures to operators, which has attracted
intense interests in both academia and industry. In this paper, a
hybrid coordinated multi-point transmission (H-CoMP) scheme is
designed for the downlink transmission in C-RANs, which fulfills the
flexible tradeoff between cooperation gain and fronthaul
consumption. The queue-aware power and rate allocation with
constraints of average fronthaul consumption for the delay-sensitive
traffic are formulated as an infinite horizon constrained partially
observed Markov decision process (POMDP), which takes both the
urgent queue state information (QSI) and the imperfect channel state
information at transmitters (CSIT) into account. To deal with the
\emph{curse of dimensionality} involved with the equivalent Bellman
equation, the linear approximation of post-decision value functions
is utilized. A stochastic gradient algorithm is presented to
allocate the queue-aware power and transmission rate with H-CoMP,
which is robust against unpredicted traffic arrivals and
uncertainties caused by the imperfect CSIT. Furthermore, to
substantially reduce the computing complexity, an online learning
algorithm is proposed to estimate the per-queue post-decision value
functions and update the Lagrange multipliers. The simulation
results demonstrate performance gains of the proposed stochastic
gradient algorithms, and confirm the asymptotical convergence of the
proposed online learning algorithm.
\end{abstract}

\begin{IEEEkeywords}
Queue-aware resource allocation, hybrid coordinated multi-point
transmission, fronthaul limitation, cloud radio access networks.
\end{IEEEkeywords}

\section{Introduction}
\IEEEPARstart{I}{t} is estimated that the demand for high-speed
mobile data traffic, such as high-quality wireless video streaming,
social networking and machine-to-machine communication, will get
1000 times increase by 2020{\cite{bib:tdscdma}}, which requires a
revolutionary approach involving new wireless network architectures
as well as advanced signal processing and networking technologies.
As key components of heterogeneous networks (HetNets), low power
nodes (LPNs) are deployed within the coverage of macro base stations
(MBSs) and share the same frequency band to increase the capacity of
cellular networks in dense areas with high traffic demands.
Unfortunately, the aggressive reuse of limited radio spectrum will
result in severe inter-cell interference and unacceptable
degradation of system performances. Therefore, it is critical to
control interference through advanced signal processing techniques
to fully unleash the potential gains of HetNets. As an integral part
of the LTE-Advanced (LTE-A) standards, the coordinated multi-point
transmission (CoMP) technique targets the suppression of the
inter-cell interference and quality of service (QoS) improvement for
the cell-edge UEs. However, CoMP is faced with some disadvantages
and challenges in real HetNets. The performance gain of CoMP highly
depends on the perfect knowledge of channel state information (CSI)
and the tight synchronization, both of which pose strict
restrictions on the backhaul of LPNs. To manipulate the high density
of LPNs with lowest capital expenditure (CAPEX) and operational
expenditure (OPEX) effectively, the cloud radio access network
(C-RAN) was proposed in{\cite{cran}} to enhance spectral efficiency
and energy efficiency performances and has recently attracted
intense interest in both academia and industry.

\begin{figure}
\centering \vspace*{0pt}
\includegraphics[scale=0.85]{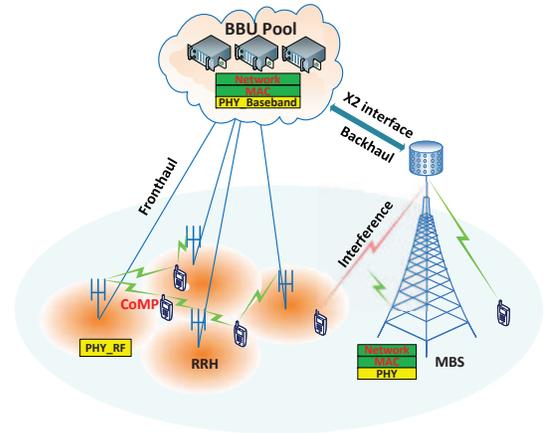}
\setlength{\belowcaptionskip}{-700pt} \vspace*{-5pt}\caption{C-RAN
architecture} \label{fig1}
\end{figure}

As depicted in Fig. \ref{fig1}, the remote radio heads (RRHs) are
only configured with the front radio frequency (RF) and simple
symbol processing functionalities, while the other baseband physical
processing and procedures of the upper layers are executed jointly
in the baseband unit (BBU) pool for UEs associating with RRHs. The
LPNs are simplified as RRHs through connecting to a ``signal
processing cloud" with high-speed fronthaul links. To coordinate the
cross-tier interference between RRHs and MBSs effectively, the BBU
pool is interfaced to MBSs. Such a distributed deployment and
centralized processing architecture facilitates the implementation
of CoMP{\cite{crancluster}} amongst RRHs of C-RANs as well as
provides ubiquitous networks coverage with MBSs. Since all the RRHs
in C-RANs are connected to the BBU pool, the CoMP can be realized
through virtual beamforming and the beamformers can be calculated in
BBU pool. Specifically, the CoMP in downlink C-RANs can be
characterized into two classes{\cite{comp}}: joint processing (JP)
and coordinated beamforming (CB). For JP, the traffic payload is
shared and transmitted jointly by all RRHs within the CoMP
cluster{\cite{JP}}, which means multiple delivery of the same
traffic payload from the centralized BBU pool to each cooperative
RRH through capacity-limited fronthaul links. As for the CB, the
traffic payload is only transmitted by the serving RRH, but the
corresponding beamformer is jointly calculated at the centralized
BBU pool to coordinate the interference to all other UEs within the
CoMP cluster{\cite{CS}}. Obviously, JP achieves higher average
spectrum efficiency than CB does at the expense of more fronthaul
consumption, while CB requires more antennas equipped with each RRH
to achieve the full intra-cluster interference coordination.
However, the practical non-ideal fronthaul with limited capacity
restricts the overall performances of CoMP in C-RANs.

\subsection{Related Works}

There exists lots of literatures aiming to alleviate the fronthaul
requirement of JP without the loss of interference exploitation. The
authors of{\cite{dynamicclustering}} proposed a dynamic clustered
multi-cell cooperation scheme to substantially reduce the backhaul
consumption by imposing restriction on the cluster size. A heuristic
algorithm was proposed in{\cite{directional}} to dynamically select
the directional cooperation links under a finite-capacity backhaul
subject to the evaluation of benefits and costs, which cannot
completely eliminate the undesired interference as in the full
cooperation case. Both reweighed $l$-1 norm minimization method and
heuristic iterative link removal algorithm were proposed
in{\cite{tonyquek}} to reduce the user data transfer via the
capacity-limited backhaul effectively by dealing with the formulated
cooperative clustering and beamforming problems, which, however, are
suboptimal and still suffer from a significant performance loss. A
backhaul cost metric considering the number of active directional
cooperation links was adopted in{\cite{asymmetricJP}}, where the
design problem is minimizing this backhaul cost metric and jointly
optimizing the beamforming vectors among the cooperative BSs subject
to signal-to-interference-and-noise-ratio (SINR) constraints at UEs.
To make a flexible tradeoff between the cooperation gain and the
backhaul consumption, a rate splitting approach under the limited
backhaul rate constraints was proposed in{\cite{streamsplitting}},
where some fraction of the backhaul capacity originally consumed by
JP could be privately used to get more performance gains. Borrowing
the idea in{\cite{streamsplitting}}, the authors of{\cite{r1}}
proposed a soft switching strategy between the JP-CoMP and CB-CoMP
modes under capacity-limited backhaul. Considering the high
complexity and the large signaling overhead, a distributed hard
switching strategy was also proposed in{\cite{r1}}. To achieve a
tradeoff between diversity and multiplexing gains of multiple
antennas and high spectral efficiency, the authors of{\cite{r2}}
studied both the dynamic partial JP-COMP and its corresponding
resource allocation in a clustered CoMP cellular networks.
Generally, for the C-RANs, the potential high spectral efficiency
gain of CoMP largely depends on the quality of obtained channel
state information at transmitters (CSIT) as well as the fronthaul
consumption. In{\cite{r3}}, channel prediction usefulness was
analyzed and compared with channel estimation in downlink CoMP
systems with backhaul latency in time-varying channels, considering
both the centralized and decentralized JP-CoMP as well as the
CB-CoMP.

However, the aforementioned works only focus on physical layer
performance of spectral efficiency or energy efficiency and ignore
the bursty traffic arrival as well as the delay requirement of
delay-sensitive traffic. Therefore, the resulting control policy is
adaptive to the channel state information (CSI) only and cannot
guarantee good delay performance for delay-sensitive applications.
In general, since the CSI could provide information regarding the
channel opportunity while the queue state information (QSI) could
indicate the urgency of the traffic flows, the queue-aware resource
allocation should be adaptive to both the CSI and QSI. Furthermore,
as the CSIT cannot be perfect in real systems, and systematic packet
errors occur when the allocated data rate exceeds the instantaneous
mutual information. Therefore, the issue of robustness against the
uncertainty incurred by imperfect CSIT should also be considered in
the resource allocation optimization.

There already have some research efforts on the queue-aware dynamic
resource allocation in stochastic wireless networks. In
paper{\cite{queuecluster}}, the authors proposed a mixed timescale
delay-optimal dynamic clustering and power allocation design with
downlink JP in traditional multi-cell networks. The queue-aware
discontinuous transmission (DTX) and user scheduling design with
downlink CB in energy-harvesting multi-cell networks was proposed
in{\cite{dtx}}. A queue-weighted dynamic optimization algorithm
using Lyapunov optimization approach was proposed
in{\cite{lyapunov}} for the joint allocation of subframes, resource
blocks, and power in the relay-based HetNets. However, all these
works focus on queue-aware resource allocations in homogeneous
networks or HetNets without the consideration of the imperfect CSIT.
Therefore, the solutions cannot work in the C-RANs with the
practical challenges of imperfect CSIT and non-ideal
capacity-limited fronthaul links.

\subsection{Main Contributions}

To the best of our knowledge, there are lack of effective signal
processing techniques and dynamic radio resource management
solutions for delay-sensitive traffic in C-RANs to optimize the SE,
EE and delay performances, which still remains challenging and
requires more investigations. Based on the aforementioned advantages
and challenges of C-RANs, the efficient CoMP scheme with tradeoff
between cooperation gain and fronthaul consumption will be
elaborated in this paper. Furthermore, under the average power and
fronthaul consumption constraints, the dynamic radio resource
management with feature of queue-awareness to maintain good delay
performance for delay-sensitive traffic in stochastic C-RANs will
also get studied in this paper. The major contributions of this
paper are as follows.

\begin{itemize}
\item To allow a flexible tradeoff between cooperation gain and average fronthaul
consumption, the H-CoMP scheme is proposed for the delay-sensitive
traffic in C-RANs by splitting the traffic payload into shared
streams and private streams. By reconstructing the shared streams
and private streams and optimizing the precoders and decorrelators,
the shared streams and private streams can be simultaneously
transmitted to obtain the maximum achievable degree of freedom (DoF)
under limited fronthaul consumption.

\item Motivated by{\cite{survey}}, to minimize the
transmission delay of the delay-sensitive traffic under the average
power and fronthaul consumption constraints in C-RANs, the
queue-aware rate and power allocation problem is formulated as an
infinite horizon average cost constrained partially observed Markov
process decision (POMDP). The queue-aware resource allocation policy
is adaptive to both QSI and CSIT in the downlink C-RANs and can be
obtained by solving a per-stage optimization for the observed system
state at each scheduling frame.

\item Since the optimal solution requires centralized implementation and perfect knowledge
of CSIT statistics and has exponential complexity w.r.t. the number
of UEs, the linear approximation of post-decision value functions
involving POMDP is presented, based on which a stochastic gradient
algorithm is proposed to allocate power and transmission rate
dynamically with low computing complexity and high robustness
against the variations and uncertainties caused by unpredictable
random traffic arrivals and imperfect CSIT. Furthermore, the online
learning algorithm is proposed to estimate the post-decision value
functions effectively.

\item The delay performances of the proposed H-CoMP and queue-aware
resource allocation solution are numerically evaluated. Simulation
results show that a significant delay performance gain can be
achieved in the fronthaul constrained C-RANs with H-CoMP, and the
queue-aware resource allocation solution is validated and effective
due to the adaptiveness to both QSI and imperfect CSIT. Further, the
stochastic gradient algorithms can improve the delay performances
drastically, and the online learning algorithm is asymptotically
converged.
\end{itemize}

The rest of this paper is organized as follows. Section II describes
the system model and section III gives the design of H-CoMP scheme
for the downlink C-RANs. The queue-aware resource allocation problem
is formulated as POMDP in section IV and a low complexity approach
is proposed in section V. The performance evaluation is conducted in
section VI and section VII summarizes this paper.

\begin{notation}
$(.)^T$ and $(.)^H$ stand for the transpose and conjugate transpose,
respectively. $(.)^\dagger $ stands for the pseudo-inverse. Besides,
$diag(\textbf{p})$ denotes a diagonal matrix formed by the vector
$\textbf{p}$.
\end{notation}

\section{System Models}

To optimize performances of downlink C-RANs, the transmission model,
traffic queue dynamic model in the medium access control (MAC)
layer, and the imperfect CSIT assumption in the physical layer are
considered in this section.

\subsection{Transmission Model}

The transmission of $M$ delay-sensitive traffic payloads in downlink
C-RANs with $M$ RRHs is considered. Denote $\mathcal {M} =
\{1,2,...,M\}$ as the UE set and $\mathcal {N} = \{1,2,...,M\}$ as
the RRH set within the CoMP cluster. An example of C-RAN with $M =
2$ is illustrated in Fig. 2. The inter-tier interferences amongst
the RRHs and MBSs are controlled by setting the maximum allowable
power consumption of each RRH indicated by the MBSs through the X2
interfaces, while the intra-tier interferences in C-RANs can be
eliminated by implementing the CoMP. Each RRH and UE are equipped
with $N_t$ and $N_r$ antennas respectively, where $M{N_r} \geq N_t >
(M - 1){N_r}$. Within the coverage of each RRH, a served UE exists
and it can also be cooperatively served by the other RRHs according
to the following proposed H-CoMP scheme. In this paper, the
scheduling is carried out in every frame indexed by $t$ and the
frame duration is $\tau$ second.

\begin{figure}
\centering \vspace*{0pt}
\includegraphics[scale=0.50]{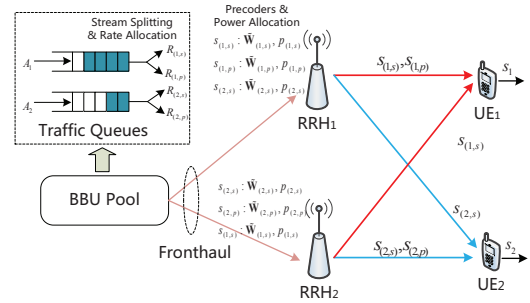}
\setlength{\belowcaptionskip}{-700pt}
\vspace*{-5pt}\caption{Workflow of resource allocation for $M$ = 2.}
\end{figure}

\subsection{Traffic Queue Dynamic Model}

Let ${\bf{Q}}(t) = \{ {Q_1}(t), \ldots ,{Q_M}(t)\}$ denote the
global QSI (number of bits) for $M$ queues maintained at BBU pool at
the beginning of scheduling frame $t$. There will be random packet
arrival ${A_i}(t)$ after ${G_i}(t)$ bits are successfully received
by UE $i$ at the end of frame $t$. The random arrival process
$A_i(t)$ is supposed to be independent identically distributed
(i.i.d) over scheduling frame according to a general distribution
with mean $\mathbb{E}\{ {A_i}(t)\} = {\lambda _i}$ and independent
w.r.t $i$. Furthermore, the statistics of ${A_i}(t)$ is supposed to
be unknown to the BBU. The queue dynamic of UE $i$ is then given by
\begin{equation}
{Q_i}(t + 1) = min\{ {[{Q_i}(t) - {G_i}(t)]^ + } + {A_i}(t),{N_Q}\},
\end{equation}
where $[x]^+ = max\{x, 0\}$ and $N_Q$ is the maximum buffer size.

\subsection{Imperfect CSIT Assumption}

Let ${{\bf{H}}_{ji}}(t) \in {\mathbb{C}^{{N_r} \times {N_t}}}$
denote the complex channel fading coefficient between RRH $i$ and UE
$j$ at frame $t$ and let $\textbf{H}(t) = \{{{\bf{H}}_{ji}}(t): j\in
\mathcal {M}, i \in \mathcal {N}\}$ denote the global CSI.
Especially, every element of ${{\bf{H}}_{ji}}(t)$ is supposed to
remain constant within a scheduling frame but be i.i.d over
scheduling frame. The perfect knowledge of CSI is assumed to be only
obtained by the UE while the imperfect CSIT ${\bf{\hat H}} =
\{{{\bf{\hat H}}_{ji}}(t) \in {\mathbb{C}^{{N_r} \times {N_t}}}:
j\in \mathcal {M}, i \in \mathcal {N}\}$ is obtained by the BBU
pool. The rank of both ${\bf{\hat H}}_{ji}$ and ${\bf{H}}_{ji}$ is
assumed to be $min\{N_r,N_t\}$. Furthermore, the imperfect CSIT
error kernel model is given by{\cite{imperfectCSIT}}

\begin{equation}
\Pr [{{\bf{\hat H}}_{ji}}|{{\bf{H}}_{ji}}] = \frac{1}{{\pi {\sigma
_{ji}}}}\exp ( - \frac{{|{{{\bf{\hat H}}}_{ji}} -
{{\bf{H}}_{ji}}{|^2}}}{{{\sigma _{ji}}}}),
\end{equation}
which is caused by duplexing delay in time division duplex (TDD)
systems or quantization errors and feedback latency in frequency
division duplex (FDD) systems. The above ${\sigma _{ji}} \in [0,1]$
indicates the CSIT quality. When ${\sigma _{ji}} = 0$, we have
${{\bf{\hat H}}_{ji}}={{\bf{H}}_{ji}}$, which corresponds to the
perfect CSIT case. When ${\sigma _{ji}} = 1$, we have ${{\bf{\hat
H}}_{ji}}{{\bf{H}}_{ji}^{\dag }}$ = {\textbf{0}}, which corresponds
to the no CSIT case.

\section{Hybrid CoMP scheme}

With the limited fronthaul capacity, the maximum achievable DoF can
be obtained by separating the traffic payload for UE $i$ into
${L_{(i,s)}}$ shared streams ${{\bf{s}}_{(i,s)}}$ and ${L_{(i,p)}}$
private streams ${{\bf{s}}_{(i,p)}}$ and simultaneously transmitting
them with optimal precoders and decorrelators, that is the hybrid
CoMP (H-CoMP) scheme. More specifically, the H-CoMP allows shared
streams to be shared across the RRHs with the CoMP cluster by
multiple delivery through capacity-limited fronthaul links.
Meanwhile, the H-CoMP makes private streams remain private to
certain RRH and the precoders are jointly calculated at the BBU pool
to eliminate the intra-cluster interference. Therefore, the
cooperative transmission of shared streams requires significantly
more fronthaul consumption than the coordinated transmission of
private streams does. In the following subsections, the traffic
streams splitting model, precoder calculation and decorrelator
calculation with the perfect CSIT will be thoroughly elaborated.

\subsection{Traffic Streams Splitting Model}

To make a flexible tradeoff between the cooperation gain and average
fronthaul consumption, the number of shared streams and private
streams should be determined with the traffic streams splitting
model. With the perfect CSIT, the zero-forcing (ZF) precoder and
decorrelator designs are adopted for both shared streams and private
streams. In this situation, at most ${L_{M,{N_t},{N_r}}} = {{N_t} -
(M - 1){N_r} }$ private streams can be zero-forced at RRH $i$ to
eliminate interference to UE ${\rm{j}} \ne i$, i.e.
\begin{equation}
L_{(i,p)} \leq {L_{M,{N_t},{N_r}}}.
\end{equation}
Furthermore, to fully recover the ${L_{(i,p)}}$ private streams and
${L_{(i,s)}}$ shared streams at UE $i$, the constraint
\begin{equation}
L_{(i,p)}+L_{(i,s)} \le N_r
\end{equation}
should be satisfied. With the traffic streams splitting, the
proposed H-CoMP allows a flexible tradeoff between the cooperation
gain and the fronthaul consumption. The achievable DoFs of different
schemes are compared in table I.

\begin{table}[tbp]
\centering \caption{Comparison of Achievable DoFs of Different
Schemes}
\begin{tabular}{|m{1.5cm}<{\centering}|m{5cm}<{\centering}|} %{lll}表示各列元素对齐方式，left-l,right-r,center-c
\hline \textbf{Scheme} & \textbf{Achievable DoF }  \\ \hline CB-CoMP
& $(M
-1){L_{M,{N_t},{N_r}}} + {N_r}$ \\ \hline JP-CoMP & $M{N_r}$\\
\hline H-CoMP & $\{ (M - 1){L_{M,{N_t},{N_r}}} + {N_r},...,M{N_r}\}$\\
\hline
\end{tabular}
\end{table}

Specifically, when $L_{(i,p)}+L_{(i,s)} = N_r$, the maximum DoF of
$MN_rM$ can be achieved by the H-CoMP scheme, and when ${L_{(i,p)}}
= {L_{M,{N_t},{N_r}}}$, the fronthaul consumption is minimized.

Although the shared streams and private streams are superimposed in
the downlink transmission of C-RANs, it is possible to eliminate the
interference at RRHs and recover both of them at UEs by constructing
the private streams and shared streams and designing optimal
precoders and decorrelators.

Let ${{\bf{s}}_{(i,s)}} \in {\mathbb{C}^{L_{(i,s)} \times 1}}$ and
${{\bf{s}}_{(i,p)}} \in {\mathbb{C}^{L_{(i,p)} \times 1}}$ denote
the shared streams and private streams respectively, where
$L_{(i,s)}$ and $L_{(i,p)}$ are the number of shared streams and
private streams. To facilitate the implementation of H-CoMP, the
shared streams and private streams are reconstructed by inserting
zero vectors as follows respectively
\begin{equation}
{{\bf{\tilde s}}_{(i,s)}} = {\{ {\bf{s}}_{(i,s)}^T,{{\textbf{0}}_{1
\times {L_{(i,p)}}}}\} ^T},
\end{equation}
\begin{equation}
{{\bf{\tilde s}}_{(i,p)}} = {\{ {\textbf{0}_{1 \times
{L_{(i,s)}}}},{\bf{s}}_{(i,p)}^T\} ^T}.
\end{equation}

Let ${{\bf{\tilde W}}_{(i,s)}} \in {\mathbb{C}^{M{N_t} \times
{(L_{(i,s)} + L_{(i,p)})}}}$ and ${\tilde{\bf{ W}}_{(i,p)}} \in
{\mathbb{C}^{{N_t} \times {(L_{(i,p)} + L_{(i,s)})}}}$ denote the
precoders for the reconstructed shared streams and reconstructed
private streams of UE $i$ respectively. Define ${{\bf{\Lambda
}}_{(i,s)}} = diag(\sqrt {P_{(i,s)}^1} , \ldots ,\sqrt
{P_{(i,s)}^{{L_{(i,s)}}}}, {\textbf{0}_{1 \times {L_{(i,p)}}}})$ and
${{\bf{\Lambda }}_{(i,p)}} = diag({\textbf{0}_{1 \times
{L_{(i,s)}}}}, \sqrt {P_{(i,p)}^1} , \ldots ,\sqrt
{P_{(i,p)}^{{L_{(i,p)}}}} )$, where $P_{(i,s)}$ and $P_{(i,p)}$
denote the transmission power of each shared stream and private
stream for UE $i$ respectively. Then the received signal vector
${{\bf{r}}_i} \in {\mathbb{C}^{{N_r} \times 1}}$ at UE $i$ is given
by
\begin{eqnarray}
&&{{\bf{r}}_i} = \underbrace {{{\bf{H}}_i}{{\tilde{\bf{
W}}}_{(i,s)}}{{\bf{\Lambda }}_{(i,s)}}{{\tilde{\bf{s}}}_{(i,s)}} +
{{\bf{H}}_{ii}}{{\tilde{\bf{ W}}}_{(i,p)}}{{\bf{\Lambda
}}_{(i,p)}}{{\tilde{\bf{
s}}}_{(i,p)}}}_{{\rm{the~desired~signals~for~UE~i }}} \\\nonumber &&
+ \underbrace {\sum\limits_{j \ne i} {{{\bf{H}}_i}{{\tilde{\bf{
W}}}_{(j,s)}}{{\bf{\Lambda }}_{(j,s)}}{{\tilde{\bf{s}}}_{(j,s)}}}
}_{{\rm{the~interference~from~shared~ streams}}}\\\nonumber && +
\underbrace {\sum\limits_{j \ne i} {{{\bf{H}}_{ij}}{{\tilde{\bf{
W}}}_{(j,p)}}{{\bf{\Lambda }}_{(j,p)}}{{\tilde{\bf{s}}}_{(j,p)}}}
}_{{\rm{the~interference~from~ private~streams}}} + {{\bf{n}}_i},
\end{eqnarray}
where ${\textbf{H}_i} = [\begin{array}{*{20}{c}}
{{\textbf{H}_{i1}}}& \cdots &{{\textbf{H}_{iM}}}
\end{array}] \in {\mathbb{C}^{{N_r} \times M{N_t}}}$ is the aggregate complex
channel fading coefficient vector from the $M$ cooperative RRHs to
UE $i$, and ${{\bf{n}}_i}$ is the zero-mean unit variance complex
Gaussian channel noise at UE $i$.

\subsection{Precoder and Decorrelator Calculation for Shared Streams }

The optimal cooperative precoder at the $M$ RRHs and the
decorrelator at UE $i$ should be designed to maximize the mutual
information of shared streams and to eliminate the interference
imposed on all the other UEs (UE $j \ne i$) as follows
\begin{eqnarray}
&&\{ {\bf{\tilde W}}_{_{(i,s)}}^ * ,{\bf{\tilde U}}_{_{(i,s)}}^ * \}
= \arg \mathop {\max }\limits_{{{{\bf{\tilde W}}}_{(i,s)}},
{{{\bf{\tilde U}}}_{(i,s)}}} {\log _2}\det [\textbf{I} + \nonumber \\
&&~~~~~~~~~~~~~~~~~~~~~~~ {{{\bf{\tilde
U}}}_{(i,s)}}{{\bf{{\bf{H}}}}_i} {{{\bf{\tilde
W}}}_{(i,s)}}{\bf{\tilde W}}_{_{(i,s)}}^H{\bf{H}}_{_i}^H{\bf{\tilde
U}}_{_{(i,s)}}^H] \nonumber\\
&&s.t.~~~~~~ {{\bf{H}}_j}{{\bf{\tilde W}}_{(i,s)}} = \textbf{0}.
\end{eqnarray}
Therefore, the precoder have the form of

\begin{equation}
{{\bf{\tilde W}}_{(i,s)}^*} = {{\bf{F}}_{(i,s)}}{{{\bf{\tilde
V}}}_{(i,s)}},
\end{equation}
where ${{\bf{F}}_{(i,s)}} \in {\mathbb{C}^{M{N_t} \times (M{N_t} -
(M - 1){N_r})}}$ is given by the orthonormal basis of
${\rm{nullspace([}}{\bf{H}}_1^T, \cdots ,{\bf{H}}_{i - 1}^T,
{\bf{H}}_{i + 1}^T{\rm{,}} \cdots {\rm{,}}$
${\bf{H}}_M^T{{\rm{]}}^T})$. Let ${{\bf{H}}_i}{{\bf{F}}_{(i,s)}} =
{{\bf{U}}_{(i,s)}}{{\bf{\Sigma }}_{(i,s)}}{\bf{V}}_{(i,s)}^H$ be the
singular value decomposition (SVD) of equivalent channel matrix
${{\bf{H}}_i}{{\bf{F}}_{(i,s)}}$, where the singular values in
${{\bf{\Sigma }}_{(i,s)}}$ are sorted in a decreasing order along
the diagonal, ${{\bf{\tilde V}}_{(i,s)}} \in {\mathbb{C}^{(M{N_t} -
(M - 1){N_r}) \times {(L_{(i,s)}+L_{(i,p)})}}}$ is then given by the
first ${L_{(i,s)}}$ columns of ${{\bf{V}}_{(i,s)}}$ as ${{\bf{\tilde
V}}_{(i,s)}} = \{ {\bf{v}}_{(i,s)}^1, \cdots
,{\bf{v}}_{(i,s)}^{{L_{(i,s)}}},{\textbf{0}}_{(M{N_t} - (M -
1){N_r}) \times L_{(i,p)}}\}$. Furthermore, the decorrelator
${\bf{\tilde U}}_{_{(i,s)}}^ *$ is given by the first ${L_{(i,s)}}$
columns of ${{\bf{U}}_{(i,s)}}$ as follows

\begin{equation}
{\bf{\tilde U}}_{_{(i,s)}}^ * = \{ {\bf{u}}_{(i,s)}^1, \cdots
,{\bf{u}}_{(i,s)}^{{L_{(i,s)}}},{\textbf{0}}_{{N_r} \times
L_{(i,p)}}\}^H.
\end{equation}
Then the recovered $L_{(i,s)}$ shared streams are given by the first
$L_{(i,s)}$ rows of ${\bf{\tilde U}}_{_{(i,s)}}^ *{{\bf{r}}_i}$.

\subsection{Precoder and Decorrelator Calculation for Private Streams }

The optimal coordinated precoder at RRH $i$ for the private streams
of UE $i$ and the decorrelator at UE $i$ should be designed to
maximize the mutual information of private streams and to eliminate
the interference imposed on all the other UEs (UE $j \ne i$) as
follows
\begin{eqnarray}
&&\{ {\bf{\tilde W}}_{_{(i,p)}}^ * ,{\bf{\tilde U}}_{_{(i,p)}}^ * \}
= \arg \mathop {\max }\limits_{{{{\bf{\tilde W}}}_{(i,p)}},
{{{\bf{\tilde U}}}_{(i,p)}}} {\log _2}\det [\textbf{I} + \nonumber\\
&&~~~~~~~~~~~~~~~~~{{{\bf{\tilde U}}}_{(i,p)}}{{\bf{{\bf{H}}}}_{ii}}
{{{\bf{\tilde W}}}_{(i,p)}}{\bf{\tilde W}}_{_{(i,p)}}^H{\bf{H}}_{ii}^H{\bf{\tilde U}}_{_{(i,p)}}^H] \nonumber\\
&&s.t.~~~~~~~ {{\bf{H}}_{ji}}{{\bf{\tilde W}}_{(i,p)}} = \textbf{0}.
\end{eqnarray}
Therefore the precoder has the similar form of
\begin{equation}
{{\bf{\tilde W}}_{(i,p)}^*} = {{\bf{\tilde F}}_{(i,p)}}{{{\bf{\tilde
V}}}_{(i,p)}},
\end{equation}
where ${\bf{\tilde F}}_{(i,p)} = [{\textbf{0}}_{N_t \times (N_r -
L_{M,N_t,N_r})}, {\bf{F}}_{(i,p)}]$ and ${\bf{F}}_{(i,p)} \in
{\mathbb{C}^{{N_t} \times L_{M,N_t,N_r}}}$ is given by the
orthonormal basis of ${\rm{nullspace([}}{\bf{H}}_{1i}^T, \cdots
,{\bf{H}}_{i-1 i}^T, {\bf{H}}_{i + 1 i}^T{\rm{,}} \cdots
{\rm{,}}{\bf{H}}_{Mi}^T{{\rm{]}}^T})$. Let
${{\bf{H}}_{ii}}{{\bf{\tilde F}}_{(i,p)}} =
{{\bf{U}}_{(i,p)}}{{\bf{\Sigma }}_{(i,p)}}{\bf{V}}_{(i,p)}^H$ be the
SVD of equivalent channel matrix ${{\bf{H}}_{ii}}{{\bf{\tilde
F}}_{(i,p)}}$, where the singular values in ${{\bf{\Sigma
}}_{(i,p)}}$ are sorted in an increasing order along the diagonal,
${{\bf{\tilde V}}_{(i,p)}} \in {\mathbb{C}^{{N_r} \times
{(L_{(i,s)}+L_{(i,p)})}}}$ is then given by the last ${L_{(i,p)}}$
columns of ${{\bf{V}}_{(i,p)}}$ as ${{\bf{\tilde V}}_{(i,p)}} =
\{{\textbf{0}}_{{N_r} \times L_{(i,s)}},{\bf{v}}_{(i,p)}^{N_r -
L_{(i,p)} + 1}, \cdots ,{\bf{v}}_{(i,p)}^{{N_r}}\}$. Furthermore,
the decorrelator ${\bf{\tilde U}}_{_{(i,p)}}^ *$ is given by the
last ${L_{(i,p)}}$ columns of ${{\bf{U}}_{(i,p)}}$ as follows
\begin{equation}
{\bf{\tilde U}}_{_{(i,p)}}^ * = \{{\textbf{0}}_{{N_r} \times
L_{(i,s)}}, {\bf{u}}_{(i,p)}^{N_r - L_{(i,p)} + 1}, \cdots
,{\bf{u}}_{(i,p)}^{N_r}\}^H.
\end{equation}
Then the recovered $L_{(i,p)}$ private streams are given by the last
$L_{(i,p)}$ rows of ${\bf{\tilde U}}_{_{(i,p)}}^
*{{\bf{r}}_i}$.

\begin{remark}{(\emph{The Interference Nulling Between Shared Streams and Private
Streams})} Although the interference nulling constraints are not
explicitly imposed, the interference between shared streams and
private streams of UE $i$ can be still eliminated due to the fact
that ${\bf{\tilde U}}_{_{(i,s)}}^
* {{\tilde{\bf{ W}}}_{(i,p)}}{{\bf{\Lambda
}}_{(i,p)}}{{\tilde{\bf{s}}}_{(i,p)}} = \textbf{0}$ and ${\bf{\tilde
U}}_{_{(i,p)}}^
* {{\tilde{\bf{ W}}}_{(i,s)}}{{\bf{\Lambda
}}_{(i,s)}}{{\tilde{\bf{s}}}_{(i,s)}} = \textbf{0}$.
\end{remark}

\subsection{The Power Consumption and Transmission Rate}

To support the cooperative transmission of $a$-th shared stream from
$M$ RRHs to UE $j$, the power contributed by RRH $i$ is given by
$P_{(j,s)}^a\rho _{(j,i)}^a$, where $P_{(j,s)}^a$ denote the total
power to transmit the $a$-th shared stream to UE $i$ and $\rho
_{(j,i)}^a = \sum\nolimits_{x = 1}^{{N_t}} {|{{[{{\bf{\tilde
W}}^{*}_{(j,s)}}]}_{((i - 1){N_t} + x,a)}}{|^2}}$ denote the
contribution by RRH $i$. To support the coordinated transmission of
$a$-th private stream from RRH $i$ to UE $i$, the power of
$P_{(i,s)}^a$ is needed. Therefore, with the proposed H-CoMP scheme,
the total transmit power consumption at RRH $i$ is given by
\begin{equation}
P_i = \sum\nolimits_{a = 1}^{{L_{(i,p)}}} {P_{(i,p)}^a} +
\sum\nolimits_{j = 1}^M {\sum\nolimits_{a = 1}^{{L_{(j,s)}}}
{P_{(j,s)}^a\rho _{(j,i)}^a} }\label{power}.
\end{equation}

In practice, both the precoders and decorrelators are calculated at
the BBU pool with the imperfect CSIT, which will cause uncertain
residual interference to the recovered streams. By treating the
uncertain interference as noise, the mutual information of $a$-th
shared stream at ${\bf{s}}_{(i,s)}^a$ UE $i$ is given by
\begin{equation}
C_{(i,s)}^a = {\log _2}(1 + {{\varphi}_{(i,s)}^a}P_{(i,s)}^a/(1 +
I_{(i,s)}^a)),
\end{equation}
where $\varphi _{(i,s)}^a = |{\bf{\tilde
U}}_{(i,s)}^a{{\bf{H}}_i}{\bf{\tilde W}}_{(i,s)}^a{|^2} $,
${\bf{\tilde U}}_{(i,s)}^a$ and ${\bf{\tilde W}}_{(i,s)}^a$ is the
$a$-th row of ${\bf{\tilde U}}_{(i,s)}^*$ and $a$-th column of
${\bf{\tilde W}}_{(i,s)}^*$ respectively, $I_{(i,s)}^a$ is the
residual interference incurred by the imperfect CSIT. The mutual
information of $a$-th private stream ${\bf{s}}_{(i,p)}^a$ of UE $i$
is given in a similar way. Due to the uncertainties of mutual
information, the data rate successfully transmitted to UE $i$ is
given by
\begin{equation}
{G_i} = ({R_{(i,s)}}\textbf{1}({R_{(i,s)}} \le {C_{(i,s)}}) +
{R_{(i,p)}}\textbf{1}({R_{(i,p)}} \le {C_{(i,p)}}))\tau,
\end{equation}
where ${C_{(i,s)}} = \sum\nolimits_{a = 1}^{{L_{(i,s)}}}
{C_{(i,s)}^a} $ and ${C_{(i,p)}} = \sum\nolimits_{a =
1}^{{L_{(i,p)}}} {C_{(i,p)}^a}$ are the mutual information for
shared streams and private streams respectively. ${R_{(i,s)}} $ and
${R_{(i,p)}}$ are the allocated data rate for shared streams and
private streams of UE $i$ respectively.

\section{Formulation of Queue-aware Control Problem}

To meet the urgency of the delay-sensitive traffic payloads and
reduce the occurrence of packet transmission failure in the downlink
C-RANs, the queue-aware resource allocation problem based on the
observed system states (QSI and
 CSIT) will be formulated in this section.

\subsection{Feasible Stationary Control Policy}

Considering the inter-tier interference imposed by RRHs and the
energy efficient transmission of delay-sensitive traffic, the
feasible resource allocation policy should satisfy the following
average power consumption constraints
\begin{equation}
{P_i}{\rm{(}}\Omega {\rm{) = }}\mathop {lim}\limits_{T \to \infty }
{sup} \frac{1}{T}\sum\limits_{t = 1}^T {{{\mathbb{E}}^\Omega
}[{P_i}(t)]} \le P_i^{\max },
\end{equation}
where ${\mathbb{E}^\Omega }$ indicates that the expectation is taken
w.r.t the measure induced by policy $\Omega$, ${P_i}(t)$ is the
total power consumption of RRH $i$ to support the H-CoMP
transmission, and $P_i^{\max }$ is the maximum average power
consumption indicated by MBSs. Furthermore, by varying the maximum
average power consumption of each RRH, cross-tier interference could
be well controlled to maintain desirable average QoS requirement for
macro UEs.

It is worth noting that compared with the fronthaul consumption for
traffic payload sharing, that for signaling delivery is negligible.
Due to the capacity-limited fronthaul links of C-RANs, the feasible
resource allocation policy also should satisfy the following average
fronthaul consumption constraints
\begin{equation}
R_i^{f}(\Omega )\!=\!\mathop {lim} \limits_{T \to \infty }\!{sup}
\frac{1}{T}\sum\limits_{t = 1}^T \!{{{\mathbb{E}}^\Omega
}[{R_{(i,p)}}(t)\! +\! \sum\limits_j\! {{R_{(j,s)}}(t)} ]} \! \le\!
R_i^{\max },
\end{equation}
where ${R_{(i,p)}}(t) + \sum\nolimits_{j \in \mathcal {M}}
{{R_{(j,s)}}(t)}$ is the total data rate to be delivered to RRH $i$
through the fronthaul link connecting RRH $i$ to BBU pool, and
$R_i^{\max }$ is the maximum average fronthaul consumption.

With the aforementioned resource constraints, the feasible
stationary resource allocation policy for C-RANs is defined as
follows.

\begin{definition}{(\emph{Stationary Resource Allocation Policy})}
A feasible stationary resource allocation policy $\Omega ({\bf{\hat
S}}) = \{ {\Omega _R}({\bf{\hat S}}),{\Omega _P}({\bf{\hat S}})\}$
is a mapping from the global observed system states ${\bf{\hat S}} =
\{ {\bf{Q}},{\bf{\hat H}}\}$ instead of the global system states
${\bf{S}} = \{ {\bf{Q}},{\bf{H}},{\bf{\hat H}}\}$ to the resource
allocation actions, where ${\Omega _P}({\bf{\hat S}}) = \{
{{{P}}^a_{(i,p)}},{{{P}}^b_{(i,s)}}: 1 \leq a \leq L_{(i,p)}, 1 \leq
b \leq L_{(i,s)}, i \in M\}$ and ${\Omega _R}({\bf{\hat S}}) = \{
{{{R}}_{(i,p)}},{{{R}}_{(i,s)}}:i \in M\}$ are the power allocation
policy and rate allocation policy subject to average power
consumption constraints and average fronthaul consumption
constraints.
\end{definition}

Given the feasible stationary resource allocation policy $\Omega
({\bf{\hat S}})$, the induced random process ${\bf{S}} = \{
{\bf{Q}},{\bf{H}},{\bf{\hat H}}\}$ is a controlled Markov chain with
the transition probability as follows
\begin{eqnarray}
\Pr\!\{ {\bf{S}}(t\! +\! 1)|{\bf{S}}(t),\Omega ({\bf{\hat
S}}(t))\}\! = \Pr\{ {\bf{\hat H}}(t \!+ \!1),{\bf{H}}(t \!+\! 1)\}
\nonumber\\ \Pr\{ {\bf{Q}}(t\! + \!1)|{\bf{S}}(t),\Omega ({\bf{\hat
S}}(t))\} .\label{tansprob}
\end{eqnarray}

Apparently, the queue dynamics of the $M$ UEs served by C-RAN are
coupled with each other via $\Omega ({\bf{\hat S}}(t))$.

\subsection{Problem Formulation}

With the positive weighting factors $\{{\beta}_i\}$ ,which indicates
the relative importance of delay requirement among the $M$ users,
the queue-aware resource allocation problem with average power
consumption constraints and average fronthaul consumption
constraints can be formulated as the following problem.

\begin{problem}{(\emph{Queue-aware Resource Allocation Problem})}
 \vspace*{-2pt}
\begin{eqnarray}
&&{\min _\Omega }D({\bf{\beta }},\Omega ) = \mathop {lim} \limits_{T \to \infty } {sup}\frac{1}{T}\sum\limits_{t = 1}^T {{{\mathbb{E}}^\Omega }[\sum\limits_{i \in M} {{\beta _i}\frac{{{Q_i}}}{{{\lambda _i}}}} ]} \\
&& s.t. \rm{~~the~constraints~(17)~and~(18)~for~each~RRH} \nonumber,
\end{eqnarray}
where the $\frac{{{Q_i}}}{{{\lambda _i}}}$ in objective function is
the average traffic delay cost for UE $i$ by Little's Law.
\end{problem}

With the average power consumption constraints and average fronthaul
consumption constraints in Problem 1, the occurrence of extreme
instantaneous power and fronthaul consumption tends to be
impossible. Furthermore, the feasible stationary resource allocation
policy is defined on the observed system states ${\bf{\hat S}} = \{
{\bf{Q}},{\bf{\hat H}}\}$. Therefore, problem 1 is a constrained
partially observed MDP (POMDP){\cite{bertsekas}}, which will be
solved by the following general approach.

\subsection{General Approach with MDP}
Using the Lagrange duality theory, the Lagrange dual function of
problem 1 is defined as

\begin{equation}
J(\gamma )\! = \!\mathop {\min }\limits_\Omega L(\beta ,\gamma
,\Omega ({\bf{\hat S}}) ) \!=\! \mathop {lim}\limits_{T \to \infty
}{sup} \frac{1}{T}\!\sum\limits_{t = 1}^T {{{\mathbb{E}}^\Omega
}[g({\bf{\beta }},{\bf{\gamma }},\Omega ({\bf{\hat
S}}))]}\label{lag},
\end{equation}
where $g({\bf{\beta }},{\bf{\gamma }},\Omega ({\bf{\hat S}})) =
\sum\limits_i {({\beta _i}\frac{{{Q_i}}}{{{\lambda _i}}}) + }
{\gamma _{(i,P)}}({{\rm{P}}_i} - P_i^{\max }) + {\gamma
_{(i,R)}}(R_i^{f} - R_i^{\max })$ is the per-stage system cost and
${\gamma _{(i,P)}}$ and ${\gamma _{(i,R)}}$ are the non-negative
Lagrange multipliers (LMs) w.r.t the power consumption constraints
and fronthaul consumption constraints. Then the dual problem of
problem 1 is given by
\begin{equation}
\mathop {\max }\nolimits_\gamma J(\gamma ).
\end{equation}

Although (\ref{lag}) is an unconstrained POMDP, the solution is
generally nontrivial. To substantially reduce the global observed
system states space, the partitioned actions are defined as follows
with the i.i.d. property of the CSIT.

\begin{definition}{(\emph{Partitioned Actions})}
Given the stationary resource allocation policy $\Omega$, $\Omega
({\bf{Q}}) = \{ \Omega ({\bf{\hat S}}):\forall {\bf{\hat H}}\}$ is
defined as the collections of power and rate allocation actions for
all possible CSIT ${\bf{\hat H}}$ on a given QSI $\textbf{Q}$,
therefore $\Omega$ is equal to the union of all partitioned actions.
i.e. $\Omega = {\cup}_{\textbf{Q}}{{\Omega}(\textbf{Q})} $.
\end{definition}

As the distribution of the traffic arrival process is unknown to the
BBU, the post-decision state potential function instead of potential
function will be introduced in the following theorem to derive the
queue-aware resource allocation policy of eq. (\ref{lag}).

\begin{theorem}{(\emph{Equivalent Bellman Equation})}\\
(a)Given the LMs, the unconstrained POMDP problem can be solved by
the equivalent Bellman equation as follows
\begin{eqnarray}
U({\bf{\tilde Q}}) + \theta = \sum\nolimits_{\bf{A}} {\Pr
({\bf{A}})} \mathop {min}\limits_{\Omega(\textbf{Q})}g(\bf{\beta },
{\bf{\gamma }}, \textbf{Q},{\Omega(\textbf{Q})}) \nonumber\\ +
\sum\nolimits_{{\bf{\tilde Q}}'} {\Pr \{ {\bf{\tilde
Q}}'|\textbf{Q},{\Omega(\textbf{Q})}\} U({\bf{\tilde Q}}')}
\label{bellman},
\end{eqnarray}
where $g(\bf{\beta }, {\bf{\gamma }},{\bf{Q}},\Omega ({\bf{Q}})) =
\mathbb{E}[g({\bf{\beta }},{\bf{\gamma }},\Omega ({\bf{\hat
S}}))|{\bf{Q}}] $ is the conditional per-stage cost and $\Pr \{
{\bf{\tilde Q}}'|{\bf{Q}},{\Omega(\textbf{Q})}\} = \mathbb{E}[\Pr
[{\bf{Q}}'|{\bf{H}},{\bf{Q}},\Omega ({\bf{\hat S}})]|{\bf{Q}}]$ is
the conditional average transition kernel, $U({\bf{\tilde Q}})$ is
the post-decision value function. ${\bf{\tilde Q}}$ is the
post-decision state and ${\bf{\tilde Q}}' = {({\bf{Q}} - {\bf{G}})^
+ }$ is the next post-decision state, where ${\bf{Q}} = \min \{
{\bf{\tilde Q}} + {\bf{A}},{N_Q}\}$ and ${\bf{G}} = \{G_i : i \in
\mathcal {M}\}$.

(b)If there exists unique $(\theta ,\{ U({\bf{\tilde Q}})\} )$ that
satisfies (\ref{bellman}), then $\theta = \mathop
{min}\limits_{\Omega(\textbf{Q})} {\mathbb{E}}[g(\bf{\beta },
{\bf{\gamma }},{\bf{Q}},\Omega ({\bf{Q}}))$ is the optimal average
per-stage cost for the unconstrained POMDP and the optimal resource
allocation policy $\Omega $ is obtained by minimizing R.H.S of
(\ref{bellman}).
\end{theorem}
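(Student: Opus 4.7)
The plan is to reduce the constrained POMDP (\ref{lag}) to a finite-state controlled Markov chain on the QSI alone and then to perform a change of variable to the post-decision state. Because the CSI/CSIT pair $(\mathbf{H}(t),\hat{\mathbf{H}}(t))$ is i.i.d.\ across frames and statistically independent of the QSI, I would first use Definition 2 to repackage the CSIT-measurable policy $\Omega(\hat{\mathbf{S}})$ as the QSI-indexed collection $\Omega(\mathbf{Q})$ of per-CSIT actions. Integrating the one-frame cost and queue transition over the joint distribution of $(\mathbf{H},\hat{\mathbf{H}})$ conditioned on $\mathbf{Q}$ then yields the conditional per-stage cost $g(\beta,\gamma,\mathbf{Q},\Omega(\mathbf{Q}))$ and the conditional kernel $\Pr\{\mathbf{Q}'|\mathbf{Q},\Omega(\mathbf{Q})\}$ defined in the theorem, and promotes $\{\mathbf{Q}(t)\}$ into a genuine controlled Markov chain on the finite state space $\{0,\dots,N_Q\}^M$. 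Standard finite-state average-cost MDP theory \cite{bertsekas} then gives the pre-decision Bellman equation
\[
V(\mathbf{Q}) + \theta = \min_{\Omega(\mathbf{Q})}\Big\{ g(\beta,\gamma,\mathbf{Q},\Omega(\mathbf{Q})) + \sum_{\mathbf{Q}'} \Pr\{\mathbf{Q}'|\mathbf{Q},\Omega(\mathbf{Q})\}\,V(\mathbf{Q}') \Big\}.
\]

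The second step is to pass from $V(\mathbf{Q})$ to the post-decision value $U(\tilde{\mathbf{Q}})$. With $\tilde{\mathbf{Q}}=(\mathbf{Q}-\mathbf{G})^+$ and $\mathbf{Q}=\min\{\tilde{\mathbf{Q}}+\mathbf{A},N_Q\}$, I would \emph{define} $U(\tilde{\mathbf{Q}}):=\sum_{\mathbf{A}}\Pr(\mathbf{A})\,V(\min\{\tilde{\mathbf{Q}}+\mathbf{A},N_Q\})$, that is, the expected pre-decision value after the independent arrival is realized. Since $\mathbf{A}$ is independent of $(\mathbf{H},\hat{\mathbf{H}})$ and of past history, the future-cost term in the pre-decision Bellman equation decomposes exactly as $\sum_{\mathbf{Q}'}\Pr\{\mathbf{Q}'|\mathbf{Q},\Omega(\mathbf{Q})\}V(\mathbf{Q}')=\sum_{\tilde{\mathbf{Q}}'}\Pr\{\tilde{\mathbf{Q}}'|\mathbf{Q},\Omega(\mathbf{Q})\}U(\tilde{\mathbf{Q}}')$. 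Averaging both sides of the pre-decision Bellman identity over $\mathbf{A}$ and rewriting the left-hand side through the definition of $U$ reproduces (\ref{bellman}); crucially, the $\min_{\Omega(\mathbf{Q})}$ legitimately sits \emph{inside} the outer $\sum_{\mathbf{A}}\Pr(\mathbf{A})$ because each realization of $\mathbf{A}$ fixes $\mathbf{Q}$, and by Definition 2 the control is chosen independently for each observed $\mathbf{Q}$.

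For Part (b), I would appeal to the standard verification argument for average-cost MDPs. Given any solution pair $(\theta,\{U(\tilde{\mathbf{Q}})\})$ of (\ref{bellman}) and a minimizer $\Omega^{*}(\mathbf{Q})$ of its right-hand side, telescoping the equation along a sample trajectory of $\tilde{\mathbf{Q}}(t)$ under any admissible $\Omega$, taking expectation, dividing by $T$, and sending $T\to\infty$ yields $\theta\le\liminf_{T}\tfrac{1}{T}\sum_{t=1}^{T}\mathbb{E}^{\Omega}[g(\beta,\gamma,\mathbf{Q}(t),\Omega(\mathbf{Q}(t)))]$ with equality attained by $\Omega^{*}$; the boundary terms vanish because $U$ is bounded on the finite state space $\{0,\dots,N_Q\}^M$. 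I anticipate the main obstacle to be the careful bookkeeping for the two successive reformulations: one must verify that the partitioned-action formalism truly permits pushing the $\min$ inside the expectation over $\mathbf{A}$ without loss, and that $\Pr\{\tilde{\mathbf{Q}}'|\mathbf{Q},\Omega(\mathbf{Q})\}$ correctly absorbs all residual CSI and transmission-rate randomness that survives after $\mathbf{Q}$ and $\Omega(\mathbf{Q})$ are fixed. The existence/uniqueness of $(\theta,U)$ is taken as a hypothesis, and in a full proof would reduce to a standard unichain condition on the controlled QSI chain together with the finiteness of the state space.
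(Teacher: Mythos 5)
Your proposal is correct and follows essentially the same route as the paper's Appendix A: both exploit the i.i.d.\ CSI/CSIT to average the full-state Bellman equation down to a QSI-only controlled chain, and then pass to the post-decision state via $\mathbf{Q}=\min\{\tilde{\mathbf{Q}}+\mathbf{A},N_Q\}$. If anything, you are more explicit than the paper on the two steps it leaves implicit, namely the definition $U(\tilde{\mathbf{Q}})=\sum_{\mathbf{A}}\Pr(\mathbf{A})V(\min\{\tilde{\mathbf{Q}}+\mathbf{A},N_Q\})$ that justifies moving the sum over $\mathbf{Q}'$ to a sum over $\tilde{\mathbf{Q}}'$, and the telescoping verification for part (b), which the paper instead delegates to Proposition 4.6.1 of Bertsekas together with the transversality condition.
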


\begin{proof}
Please refer to Appendix A
\end{proof}

\begin{remark}{(\emph{The Zero Duality Gap})}
Although the objective function of problem 1 is not convex w.r.t the
stationary resource control policy, the duality gap between the dual
problem and primal problem is zero when the condition Theorem 1 (b)
is established, which implies that the primal optimal resource
control policy can be obtained by solving the equivalent Bellman
equation of the dual optimal problem.
\end{remark}

\begin{remark}{(\emph{The Computational Complexity})}
Solving the equivalent Bellman equation involves ${N_Q}^M + 1$
unknowns $(\theta ,\{ U({\bf{\tilde Q}})\} )$ and ${N_Q}^M$
nonlinear fixed point equations, which means exponential state
space, enormous computational complexity and full knowledge of
system states transition probability in (\ref{tansprob}). Therefore,
a low complexity solution based on linear approximation and online
learning of post-decision value functions will be further studied.
\end{remark}

\section{Low Complexity Approach}

In this section, to substantially reduce the enormous computing
complexity in centralized BBU pool, the linear approximation of
post-decision value functions is utilized, upon which a stochastic
gradient algorithm is proposed to obtain the QAH-CoMP policy and an
online learning algorithm is proposed to estimate the post-decision
value functions.

\subsection{Linear Approximation of Post-decision Value Functions}

The linear approximation of post-decision value functions is defined
by the sum of the per-queue value functions as
follows{\cite{linearapprox}}

\begin{equation}
U({\bf{\tilde Q}}) \approx \sum\nolimits_{i \in \mathcal {M}}
{{U_i}({{\tilde Q}_i})},
\end{equation}
where ${U_i}({\tilde Q_i})$ is the per-queue post-decision value
functions which satisfies the following per-queue fixed point
Bellman equation
\begin{eqnarray}
{U_i}({{{\rm{\tilde Q}}}_i})\! + \!{\theta _i}\!&& =
\!\sum\nolimits_{{A_i}} {\Pr ({A_i})} \mathop
{min}\limits_{{\Omega}_{i}(Q_i)}[{g_i}({\beta _i},{{\bf{\gamma }}_i}
,{Q_i},{{\Omega}_{i}})\nonumber\\&& + \sum\nolimits_{{{\tilde
Q}_i}'} {\Pr \{ \tilde Q{'_i}|{Q_i},{{\Omega}_{i}}\} U(\tilde
Q{'_i})}] \label{perqueuebellman},
\end{eqnarray}
where ${g_i}({\beta _i},{{\bf{\gamma }}_i} ,{Q_i},{{\Omega}_{i}}) =
\mathbb{E}[ {\beta _i}\frac{{{Q_i}}}{{{\lambda _i}}} + {\gamma
_{(i,P)}}(\sum\limits_{a = 1}^{{L_{(i,s)}}} {P_{(i,s)}^a\rho
_{(i,i)}^a} + \sum\limits_{a = 1}^{{L_{(i,p)}}} {P_{(i,p)}^a} -
P_i^{\max }) + \sum\limits_{j \in \mathcal {M}, j \ne i} {{\gamma
_{(j,P)}}}$ $ \sum\limits_{a = 1}^{{L_{(i,s)}}} {P_{(i,s)}^a\beta
_{(i,j)}^a} + {\gamma _{(i,R)}}({R_{(i,p)}}(t) + \sum\limits_{j \in
\mathcal {M}} {{R_{(j,s)}}(t)} - R_i^{\max })|Q_i]$ is the per-queue
per-stage cost function. ${Q_i} = \min \{ {\tilde Q_i} +
{A_i},{N_Q}\}$ is the pre-decision state and $\tilde Q{'_i} =
{({Q_i} - {G_i})^\dag }$ is the next post-decision state. The
optimality of linear approximation is established in the following
lemma.

\begin{lemma}{(\emph{The Optimality of Linear Approximation})}
The linear approximation is optimal only when the CSIT is perfect,
which means the interference is completely eliminated with H-CoMP
scheme, therefore, the queue dynamics of $M$ UEs are decoupled.
\end{lemma}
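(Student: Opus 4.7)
The plan is to establish the lemma by exhibiting the exact structural decoupling that holds precisely when the residual interference terms $I_{(i,s)}^a, I_{(i,p)}^a$ vanish, and then to argue that such decoupling makes the sum-of-per-queue ansatz solve the joint Bellman equation.

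First, I would unpack the transmission model under perfect CSIT. When $\sigma_{ji}=0$ for all $(j,i)$, the ZF precoders $\tilde{\mathbf{W}}^{*}_{(i,s)}, \tilde{\mathbf{W}}^{*}_{(i,p)}$ constructed in Section III satisfy $\mathbf{H}_j \tilde{\mathbf{W}}_{(i,s)} = \mathbf{0}$ and $\mathbf{H}_{ji}\tilde{\mathbf{W}}_{(i,p)} = \mathbf{0}$ for all $j\neq i$, and by Remark 1 the inter-stream interference between UE $i$'s own shared and private streams also vanishes. Hence $I_{(i,s)}^a = I_{(i,p)}^a = 0$, and the mutual informations $C_{(i,s)}$ and $C_{(i,p)}$ depend only on the power allocation $\{P_{(i,s)}^a, P_{(i,p)}^a\}$ for UE $i$ and on $\mathbf{H}_{ii}$ (together with the null-space bases, which are functions of the CSIT but not of other UEs' data-bearing actions). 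Thus the goodput $G_i$ in (16) becomes a function only of UE $i$'s own sub-policy $\Omega_i(\hat{\mathbf{S}})$ and own channel, not of the actions addressed to other users.

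Next, I would observe the two structural consequences this has on the controlled Markov chain in (\ref{tansprob}). (i) The per-stage cost decomposes additively: writing $g(\boldsymbol{\beta},\boldsymbol{\gamma},\Omega(\hat{\mathbf{S}})) = \sum_i g_i(\beta_i,\boldsymbol{\gamma}_i,\Omega_i(\hat{\mathbf{S}}))$ as in (\ref{perqueuebellman}), each $g_i$ depends only on $(Q_i, \Omega_i)$ because $P_i$ in (\ref{power}) and the fronthaul term in (18) are sums of per-UE contributions that the LM-penalized formulation naturally assigns to the responsible queue. (ii) The conditional queue kernel factorizes: since $G_i$ depends only on $(Q_i,\Omega_i,\mathbf{H}_{ii})$ and the channel blocks $\{\mathbf{H}_{ii}\}$ together with the arrivals $\{A_i\}$ are independent across $i$, we obtain
\begin{equation}
\Pr\{\tilde{\mathbf{Q}}'\mid \mathbf{Q},\Omega(\mathbf{Q})\} = \prod_{i\in\mathcal{M}} \Pr\{\tilde{Q}_i'\mid Q_i,\Omega_i(Q_i)\}.
\end{equation}

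With additive cost and a product kernel, I would then plug the ansatz $U(\tilde{\mathbf{Q}}) = \sum_i U_i(\tilde{Q}_i)$ into the joint Bellman equation (\ref{bellman}). The expectation of the continuation term collapses to $\sum_i \mathbb{E}[U_i(\tilde{Q}_i')\mid Q_i,\Omega_i]$, and the joint minimization over $\Omega(\mathbf{Q})$ separates into $M$ independent minimizations over $\Omega_i(Q_i)$, each yielding exactly the per-queue fixed point equation (\ref{perqueuebellman}) with $\theta = \sum_i \theta_i$. By the uniqueness clause of Theorem 1(b), this sum is the genuine value function, so the linear approximation is exact.

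The hard part — and I would finish with it — is the converse direction, namely showing that imperfect CSIT breaks this exactness. When $\sigma_{ji}>0$, the residual interference $I_{(i,s)}^a$ and $I_{(i,p)}^a$ entering (15) are functions of $\{P_{(j,\cdot)}^{\cdot}\}_{j\neq i}$, so $G_i$ (and hence the kernel for $Q_i$) depends on other UEs' actions. The product-kernel factorization fails, the joint minimization in (\ref{bellman}) no longer separates across $i$, and any sum-of-per-queue candidate generically fails to solve the joint Bellman equation. I would argue this by contradiction: if $\sum_i U_i$ were still a solution, then differentiating the R.H.S. of (\ref{bellman}) w.r.t.\ the cross-coupled power variables would force the coupling terms in $\partial C_{(i,s)}/\partial P_{(j,\cdot)}$ to vanish identically in the CSIT distribution, which contradicts the strict positivity of $\sigma_{ji}$. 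Hence optimality of the linear approximation holds only in the perfect-CSIT regime, as claimed.
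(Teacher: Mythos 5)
Your forward argument---perfect CSIT $\Rightarrow$ zero residual interference $\Rightarrow$ $G_i$ depends only on UE $i$'s own sub-policy and the (commonly observed) channel, not on other UEs' actions $\Rightarrow$ additive per-stage cost and a queue kernel whose $i$-th marginal depends only on $(Q_i,\Omega_i)$ $\Rightarrow$ the ansatz $U(\tilde{\mathbf{Q}})=\sum_i U_i(\tilde{Q}_i)$ with $\theta=\sum_i\theta_i$ separates the joint Bellman equation into the per-queue equations (\ref{perqueuebellman})---is essentially the paper's Appendix B proof, including the use of the independence of the arrival processes $A_i$ to split the outer expectation over $\mathbf{A}$ into per-queue expectations. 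The one place you go beyond the paper is the converse: the paper only establishes that perfect CSIT is \emph{sufficient} for exactness of the linear approximation and never proves the ``only when'' half of the lemma's wording. Your contradiction sketch (with $\sigma_{ji}>0$ the residual terms $I^a_{(i,s)}$ depend on $\{P^{a}_{(j,\cdot)}\}_{j\neq i}$, so the kernel no longer factorizes and an additive solution would force the cross-derivatives of the R.H.S.\ of (\ref{bellman}) to vanish identically) is a reasonable direction, but as written it is only a sketch---``generically fails'' is not a proof, and you would still need to exclude the possibility that the coupled Bellman operator admits an additive fixed point for particular arrival and channel statistics. Since the paper omits that direction entirely, you are not behind it; just be aware that neither your converse nor the paper's proof fully substantiates the ``only'' in the lemma statement.
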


\begin{proof}
Please refer to Appendix B.
\end{proof}

Generally, the error variance ${\sigma}_{ji}$ of the imperfect CSIT
can not be large, therefore the linear approximation is
asymptotically accurate with sufficiently small error variance of
CSIT.

\begin{remark}{(\emph{The Computing Complexity})}
With the linear approximation, the calculation of the post-decision
value functions in BBU pool is alleviated from exponential
complexity $ \mathcal {O}((N_Q + 1)^M)$ to polynomial complexity
$\mathcal {O}((N_Q + 1)M)$.
\end{remark}

\subsection{Low Complexity QAH-CoMP Policy}

With the combination of the linear approximation and equivalent
Bellman equation (\ref{bellman}), the QAH-CoMP policy can be
obtained by solving the following per-stage optimization for every
observed system state, which is summarized as the following
corollary.

\begin{corollary}[Per-Stage Optimization]
With the observation of current system states, the per-stage
optimization is given by
\begin{equation}
{\Omega ^*}({\bf{\hat S}}) \!=\! \{ \Omega _P^*({\bf{\hat
S}}),\Omega _R^*({\bf{\hat S}})\} \! =\! {\rm{arg}}\!\!\!\mathop
{min}\limits_{\Omega _P({\bf{\hat S}}),\Omega _R({\bf{\hat
S}})}\!\!\! B({\bf{\hat S}},\!{\Omega _P({\bf{\hat S}}),\Omega
_R({\bf{\hat S}})}),
\end{equation}
where $B({\bf{\hat S}},{\Omega _P({\bf{\hat S}}),\Omega _R({\bf{\hat
S}})}) = \sum\limits_{i \in \mathcal {M}} {\{{\gamma _{(i,P)}}P_i
{\gamma _{(i,R)}}({R_{(i,p)}}(t) }$ ${+ \sum\limits_j
{{R_{(j,s)}}(t)} ) +
{\mathbb{E}}[{{\bf{1}}_{(i,s)}}{{\bf{1}}_{(i,p)}}]({U_i}({Q_k} -
{\tau R_{(i,s)}}}{ - \tau {R_{(i,p)}})}$ ${ - {U_i}({Q_k} - \tau
{R_{(i,p)}}) - {U_i}({Q_k} - \tau {R_{(i,s)}}) + {U_i}({Q_k}))} +
{\mathbb{E}}[{{\bf{1}}_{(i,s)}}]({U_i}({Q_k} - \tau {R_{(i,s)}}) -
{U_i}({Q_k})) + {\mathbb{E}}[{{\bf{1}}_{(i,p)}}]({U_i}({Q_k} - \tau
{R_{(i,p)}}) - {U_i}({Q_k}))\}$ is the per-stage objective,
${{\bf{1}}_{(i,p)}} = {\bf{1}}({R_{(i,p)}} \le {C_{(i,p)}})$ and
${{\bf{1}}_{(i,p)}} = {\bf{1}}({R_{(i,s)}} \le {C_{(i,s)}})$ are the
indicator functions.
\end{corollary}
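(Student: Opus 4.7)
The plan is to derive the per-stage optimization by substituting the linear approximation (24) into the equivalent Bellman equation of Theorem 1 and then stripping away every quantity that is constant in the current control action $\Omega(\hat{\textbf{S}})$. I would proceed as follows.

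First, I would plug $U(\tilde{\textbf{Q}}) \approx \sum_i U_i(\tilde{Q}_i)$ into (\ref{bellman}). Because the per-stage cost $g(\beta,\gamma,\Omega(\hat{\textbf{S}}))$ separates additively across UEs and because the next post-decision state satisfies $\tilde{Q}_i' = (Q_i - G_i)^+$ with $G_i$ depending only on the resource allocated to UE $i$, the joint minimization over $\Omega(\textbf{Q})$ decouples into $M$ per-queue minimizations, which are exactly the right-hand side of (\ref{perqueuebellman}). Hence at the observed state $\hat{\textbf{S}}$ the optimal action is the minimizer of $\sum_i \{ g_i(\beta_i,\gamma_i,Q_i,\Omega_i) + \mathbb{E}[U_i(\tilde{Q}_i')|\hat{\textbf{S}}] \}$.

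Next, I would drop all terms that do not depend on $(\Omega_P(\hat{\textbf{S}}),\Omega_R(\hat{\textbf{S}}))$: the dual offset $\theta_i$, the arrival average over $A_i$ (which is fixed once $Q_i$ is observed), the queueing cost $\beta_i Q_i/\lambda_i$, and the constants $-P_i^{\max}$, $-R_i^{\max}$ in the Lagrangian. What remains from $g_i$ is precisely the penalty $\gamma_{(i,P)} P_i + \gamma_{(i,R)}(R_{(i,p)} + \sum_j R_{(j,s)})$, after switching the order of the $(i,j)$-summation so that the $\rho$-weighted cooperative power contributions in (\ref{power}) are collected into $P_i$ at the serving RRH.

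The nontrivial step is handling $\mathbb{E}[U_i(\tilde{Q}_i')|\hat{\textbf{S}}]$. Using $G_i = (R_{(i,s)}\mathbf{1}_{(i,s)} + R_{(i,p)}\mathbf{1}_{(i,p)})\tau$ from (16), the pair of indicators takes only four values, so I can expand $U_i((Q_i-G_i)^+)$ as a sum of four terms weighted by the mutually exclusive events $\mathbf{1}_{(i,s)}\mathbf{1}_{(i,p)}$, $\mathbf{1}_{(i,s)}(1-\mathbf{1}_{(i,p)})$, $(1-\mathbf{1}_{(i,s)})\mathbf{1}_{(i,p)}$, and $(1-\mathbf{1}_{(i,s)})(1-\mathbf{1}_{(i,p)})$. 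Taking the conditional expectation and regrouping by inclusion--exclusion produces the baseline $U_i(Q_i)$, a $\mathbb{E}[\mathbf{1}_{(i,s)}]$-weighted increment $U_i(Q_i-\tau R_{(i,s)}) - U_i(Q_i)$, a $\mathbb{E}[\mathbf{1}_{(i,p)}]$-weighted increment $U_i(Q_i-\tau R_{(i,p)}) - U_i(Q_i)$, and a $\mathbb{E}[\mathbf{1}_{(i,s)}\mathbf{1}_{(i,p)}]$-weighted cross-increment $U_i(Q_i - \tau R_{(i,s)} - \tau R_{(i,p)}) - U_i(Q_i-\tau R_{(i,s)}) - U_i(Q_i-\tau R_{(i,p)}) + U_i(Q_i)$. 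Since the baseline $U_i(Q_i)$ is again independent of the control, it drops out of the arg-min, leaving exactly the three value-function differences that appear in $B(\hat{\textbf{S}},\Omega_P,\Omega_R)$.

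The main obstacle is the bookkeeping when moving from the conditional expectation over the unobserved true CSI $\textbf{H}$ (which generates the probabilities $\mathbb{E}[\mathbf{1}_{(i,s)}]$, $\mathbb{E}[\mathbf{1}_{(i,p)}]$, $\mathbb{E}[\mathbf{1}_{(i,s)}\mathbf{1}_{(i,p)}]$ through the imperfect CSIT kernel in (2)) to the deterministic minimization over $(\Omega_P(\hat{\textbf{S}}),\Omega_R(\hat{\textbf{S}}))$ at the observed $\hat{\textbf{S}}$. Once these conditional probabilities are identified as the coefficients emitted by the indicator decomposition, summing the remaining action-dependent terms across $i \in \mathcal{M}$ reproduces $B(\hat{\textbf{S}},\Omega_P,\Omega_R)$ verbatim and the corollary follows.
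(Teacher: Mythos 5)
Your proposal is correct and follows essentially the same route the paper intends: the paper offers no explicit proof of this corollary beyond the remark that it comes from ``the combination of the linear approximation and equivalent Bellman equation,'' and your derivation---substituting $U(\tilde{\mathbf{Q}})\approx\sum_i U_i(\tilde{Q}_i)$ into the Bellman equation, discarding the action-independent terms ($\theta_i$, $\beta_i Q_i/\lambda_i$, $P_i^{\max}$, $R_i^{\max}$, and the baseline $U_i(Q_i)$), and expanding $\mathbb{E}[U_i((Q_i-G_i)^+)]$ by inclusion--exclusion over the four joint outcomes of $\mathbf{1}_{(i,s)}$ and $\mathbf{1}_{(i,p)}$---is exactly the computation that produces the stated $B(\hat{\mathbf{S}},\Omega_P,\Omega_R)$. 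The only cosmetic discrepancies are typographical on the paper's side (the missing $+$ between $\gamma_{(i,P)}P_i$ and $\gamma_{(i,R)}(\cdot)$, and $Q_k$ for $Q_i$), which your reconstruction implicitly corrects.
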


The per-stage optimization above is intractable due to that the
expectation $\mathbb{E}$ required the explicit knowledge of CSIT
errors in BBU pool. To deal with this challenge, the per-stage
optimization problem can be solved by the following stochastic
gradient algorithm{\cite{stochastic}}.
\begin{alg}{(\emph{Stochastic Gradient Algorithm})}

At each frame $t > 1$, the queue-aware power and rate allocations
for each UE can be obtained as the following iteration
\begin{equation}
e_i^{t}({{\bf{\hat S}}_i}) = {[e_i^{t-1}({{\bf{\hat S}}_i}) -
{\gamma _e}(t-1)d(e_i^{t-1}({{\bf{\hat S}}_i}))]^ + },
\end{equation}
where ${\gamma _e}(t)$ is the step size satisfying ${\gamma _e}(t)
> 0,\sum\nolimits_t {{\gamma _e}(t)} = \infty ,\sum\nolimits_t
{{{({\gamma _e}(t))}^2}} < \infty$ and $d(e_i^t({{\bf{\hat S}}_i}))$
is the stochastic gradient w.r.t power and rate allocation, which is
summarized as follows
\begin{equation}
\left\{ \begin{array}{l} \frac{{\partial B({\bf{\hat S}},{\Omega
_P({\bf{\hat S}}),\Omega _R({\bf{\hat S}})})}}{{\partial
P_{(i,p)}^a}} = {\gamma _{(i,P)}} + \frac{{\partial {h_i}({\bf{\hat
S}},{\Omega _P({\bf{\hat S}}),\Omega
_R({\bf{\hat S}})})}}{{\partial P_{(i,p)}^a}}\\
\frac{{\partial B({\bf{\hat S}},{\Omega _P({\bf{\hat S}}),\Omega
_R({\bf{\hat S}})})}}{{\partial P_{(i,s)}^a}} = {\gamma
_{(i,P)}}\rho _{(i,i)}^a + \frac{{\partial {h_i}({\bf{\hat
S}},{\Omega _P({\bf{\hat S}}),\Omega
_R({\bf{\hat S}})})}}{{\partial P_{(i,s)}^a}} \\~~~~~~~~~~~~~~~~~~~~~~ + \sum\limits_{j \ne i,j \in \mathcal {M}} {{\gamma _{(j,P)}}\rho _{(i,j)}^a} \\
\frac{{\partial B({\bf{\hat S}},{\Omega _P({\bf{\hat S}}),\Omega
_R({\bf{\hat S}})})}}{{\partial {R_{(i,p)}}}} = {\gamma _{(i,R)}} +
\frac{{\partial {h_i}({\bf{\hat S}},{\Omega _P({\bf{\hat S}}),\Omega
_R({\bf{\hat S}})})}}{{\partial {R_{(i,p)}}}}\\
\frac{{\partial B({\bf{\hat S}},{\Omega _P({\bf{\hat S}}),\Omega
_R({\bf{\hat S}})})}}{{\partial {R_{(i,p)}}}} = {\gamma _{(i,R)}} +
\frac{{\partial {h_i}({\bf{\hat S}},{\Omega _P({\bf{\hat S}}),\Omega
_R({\bf{\hat S}})})}}{{\partial {R_{(i,p)}}}} \\ ~~~~~~~~~~~~~~~~~
~~~~~+ \sum\limits_{j \ne i,j \in \mathcal {M}} {{\gamma _{(j,R)}}}
\end{array} \right.\label{stoch},
\end{equation}
where ${h_i}({\bf{\hat S}},{\Omega _P({\bf{\hat S}}),\Omega
_R({\bf{\hat S}})}) = {{\bf{1}}_{(i,s)}}({U_i}({Q_k} - \tau
{R_{(i,s)}}) - {U_i}({Q_k})) + {{\bf{1}}_{(i,p)}}({U_i}({Q_k} - \tau
{R_{(i,p)}}) - {U_i}({Q_k})) + {{\bf{1}}_{(i,s)}}{{\bf{1}}_{(i,p)}}(
{U_i}({Q_k} - \tau {R_{(i,s)}} - \tau {R_{(i,p)}}) - {U_i}({Q_k} -
\tau {R_{(i,s)}}) - {U_i}({Q_k} - \tau {R_{(i,p)}}) +
{U_i}({Q_k}))$.
\end{alg}

When ${\bf{\hat H}} = {\bf{ H}}$, there is no interference under the
H-CoMP with perfect CSIT, and $d(e_i^t({{\bf{\hat S}}_i}))$ is
deterministic instead of stochastic. Using the standard gradient
update argument, the gradient search converges to a local optimum as
$t \to \infty$. Therefore, the Algorithm 1 gives the asymptotically
local optimal solution at small CSIT errors, which means that the
explicit knowledge of imperfect CSIT is unnecessary and it is robust
against the uncertainties caused by imperfect CSIT.

\begin{remark}{(\emph{Feedback-Assisted Realization of Algorithm 1})}
The calculation of stochastic gradient (\ref{stoch}) in BBU pool
requires some items regarding the indicator functions
${\bf{1}}_{(i,s)}$ and ${\bf{1}}_{(i,p)}$ and the differential of
post-decision value functions $ U_i^{'}({\tilde Q_i})$. At each
frame $t$, the indicator functions are unknown by BBU pool and have
to be fed back from UEs, which is feasible due to that there are
existing built-in mechanisms in wireless networks for these ACK/NACK
feedback from UEs. In addition, since there is no closed-form
expression of post-decision value function $U_i^{'}({\tilde Q_i})$,
its differential can be estimated as follows
\begin{equation}
U_i^{'}({\tilde Q_i}) = U_i({\tilde Q_i}) - U_i({\tilde Q_i} - 1),
\end{equation}
where the online learning of $U_i({\tilde Q_i})$ will be elaborated
in next subsection.
\end{remark}

\subsection{Online Learning of Per-queue Post-decision Value Functions}

The post-decision value functions are critical to the derivation of
queue-aware resource allocation policy for C-RANs, which can be
obtained by solving $N_Q$ fixed point nonlinear Bellman equations
with $N_Q + 1$ variables. The offline calculation requires the
explicit knowledge of conditional average transition kernel, which
is infeasible. In this section, with the realtime observation of QSI
and CSIT, the online learning of per-queue post-decision value
functions is proposed based on the equation (\ref{perqueuebellman}).
Meanwhile, with the realtime resource control actions, the LMs are
updated to make sure the average power consumption constraints and
average fronthaul consumption constraints are
satisfied{\cite{xrcao}}. The online learning of per-queue value
functions and the update of LMs at centralized BBU pool are
described as follows.

\begin{alg}{(\emph{Online Learning of Per-Queue Value Functions and Update of
LMs})}
\begin{step}[\textbf{Initialization}]
Set $t = 0$, the per-queue post-decision value functions
$\{U_i^{0}({{\tilde Q}_i})\}$ and LMs $\{\gamma _{(i,P)}^{0}, \gamma
_{(i,R)}^{0} \} > 0$ are initialized at the centralized BBU pool.
\end{step}
\begin{step}[\textbf{Queue-Aware Resource Allocation}]
At the beginning of the $t$-th frame, given fixed $\{ {\eta
_j}(n)\}$, the queue-aware power and rate allocation for downlink
H-CoMP transmission are determined at the BBU pool using the
stochastic gradient algorithm in (\ref{stoch}).
\end{step}
\begin{step}[\textbf{Online Learning of $U_i^{t + 1}({{\tilde Q}_i})$}]
With the observation of post-decision QSI $\{{{\tilde Q}_i} \}$ and
pre-decision QSI $\{{Q_i}\}$, the per-queue post-decision value
function $U_i({{\tilde Q}_i})$ is online learned at BBU pool
(\ref{valueupdate}) for each traffic queue as follows
\begin{eqnarray}
U_i^{t + 1}({{\tilde Q}_i})&& = U_i^t({{\tilde Q}_i}) + {\zeta
_u}(t)[{g_i}(\gamma _i^t,{{{\bf{\hat S}}}_i},{P_i},{R_i} +
U_i^t({Q_i} - {U_i})\nonumber\\&& - U_i^t(\tilde Q_i^0) -
U_i^t({{\tilde Q}_i})]\label{valueupdate}.
\end{eqnarray}
\end{step}
\begin{step}[\textbf{Update of $\{\gamma
_{(i,P)}^{t + 1}, \gamma _{(i,R)}^{t + 1} \}$}] With the observation
of power and rate allocation, the $\{\gamma _{(i,P)}^{t + 1}\}$ and
$\{ \gamma _{(i,R)}^{t + 1} \}$ are updated according to
eq.(\ref{lmpower}) and eq.(\ref{lmrate}) at the BBU pool for the
power consumption constraints and fronthaul consumption constraints
respectively,
\begin{equation}
\gamma _{(i,P)}^{t + 1} \!= \!{[\gamma _{(i,P)}^t + {\zeta _\gamma
}\!(t)({P_i} - P_i^{\max })]^ + }\label{lmpower},
\end{equation}
\begin{equation}
\gamma _{(i,R)}^{t + 1}\! = \!{[\gamma _{(i,R)}^t\! + \!{\zeta
_\gamma }\!(t)({R_{(i,p)}}(t)\! + \!\sum\nolimits_{j \in \mathcal
{M}} {{R_{(j,s)}}\!(t)} \! -\! R_i^{\max })]^ + }\label{lmrate}.
\end{equation}
\end{step}
\begin{step}[\textbf{Termination}]
Set $t = t + 1$ and continue to step 2 until certain termination
condition is satisfied.
\end{step}

The ${\zeta _u}(t)$ and ${\zeta _\gamma }(t)$ in step 3 and step 4
is the iterative step size of post-decision value functions and LMs
respectively. To make sure the convergence of iteration, they should
satisfy the conditions as follows{\cite{convergence}}:

\begin{equation}
 {\zeta _u}(t) > 0,\sum\nolimits_t {{\zeta _u}(t)} = \infty,
\end{equation}
\begin{equation}
 {\zeta _\gamma }(t) > 0,\sum\nolimits_t {{\zeta _\gamma }(t)} =
 \infty,
\end{equation}
\begin{equation}
 \sum\nolimits_t{({{({\zeta _u}(t))}^2} + } {({\zeta _\gamma }(t))^2}) <
 \infty,
\end{equation}
\begin{equation}
\mathop {\lim }\limits_{t \to \infty } \frac{{{\zeta _\gamma
}(t)}}{{{\zeta _u}(t)}} = 0\label{stepsize}.
\end{equation}
\end{alg}

\begin{remark}[Two Timescales of Iterations]
The condition (\ref{stepsize}) implies that the LMs are relatively
static during the iteration of per-queue value functions. Therefore
the iteration of post-decision value functions and the iteration of
LMs are done simultaneously but over two different time
scales{\cite{onlinelearning}}.
\end{remark}

It is a remarkable fact that the size of per-queue states(in bits)
is still large. To accelerate the estimation of each post-state
value function, the per-queue QSI space ${{\mathcal {Q}}_i}$ is
partitioned into $N$ regions as (\ref{region})

\begin{equation}
{{\mathcal {Q}}_i} = \bigcup\nolimits_{n = 1}^N {{{\mathcal
{R}}_n}}\label{region}.
\end{equation}
Therefore, the average value function w.r.t each region is online
learned instead, then the post-decision value function of each state
within the region can be estimated by interpolation method after
each iteration.

\section{Performances Evaluation}

In this section, simulations are conducted to compare the
performances of the proposed QAH-CoMP with various baselines in
C-RANs. The delay-sensitive traffic packet arrival follows a Poisson
distribution and the corresponding packet size follows an
exponential distribution, which is a widely adopted traffic
model{\cite{survey}}. The mean size of traffic packet is 4Mbits and
the maximum buffer size is 32Mbits. The CSI ${{\bf{H}}_{ij}}$ is
uniformly distributed over a state space ${\mathcal {H}}^{N_r \times
N_t}$ and the error variance of the imperfect CSIT is ${\varepsilon
_e} = 0.05$. The configuration of multi-antennas is given by $\{
{N_t} = 5,{N_r} = 2\}$ and the cluster size is $M$ = 3. Therefore,
with the stream splitting of the H-CoMP scheme, there are one shared
stream and one private stream to be transmitted for each UE. The
total bandwidth of simulated C-RAN is 20MHz and the scheduling frame
duration is 10ms. The noise power is -15dBm.

Three baselines are considered in the simulations: CB-CoMP, JP-CoMP,
and channel-aware resource allocation with H-CoMP (CAH-CoMP). All
these three baselines carry out rate and power allocation to
maximize the average system throughput with the same fronthaul
capacity and average power consumption constraint as the proposed
QAH-CoMP. For the CB-CoMP baseline, the BBU pool calculates the
coordinated beamformer for each RRH to eliminate the dominating
intra-cluster interference. For the CAH-CoMP baseline, the proposed
H-CoMP transmission is adopted, while the power allocation and rate
allocation are only adaptive to CSIT.

Fig. 3 compares the delay performance of the four schemes with
different packet arrival rate. The average packet delay of all the
schemes increases as the average packet arrival rate increases.
Compared with CB, the delay outperformance of JP weakens as the
packet arrival rate increases, which is due to the fact that the
fronthaul capacity becomes relatively limited with the increasing
packet arrival rate. Apparently, the performance gain of QAH-CoMP
compared with CAH-CoMP is contributed by power and rate allocation
with the consideration of both urgent traffic flows and imperfect
CSIT.

\begin{figure}
\centering \vspace*{0pt}
\includegraphics[scale=0.5]{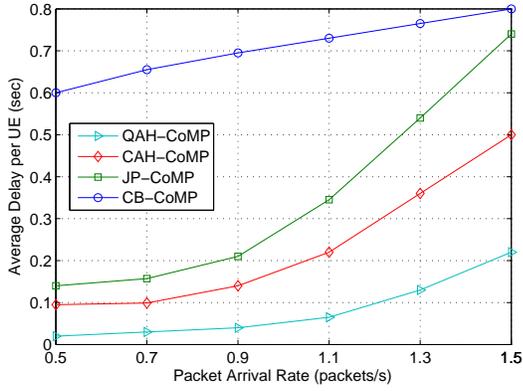}
\setlength{\belowcaptionskip}{-100pt}
\vspace*{-10pt}\caption{Average packet delay vs. packet arrival
rate, the maximum fronthaul consumption is $R_i^{max}$ = 20Mbits/s,
the maximum transmit power is $P_i^{max}$ = 10dBm.} \label{fig2}
\end{figure}

Fig. 4 compares the delay performance of the four schemes with
different maximum transmit power. The figure depicts the medium
fronthaul consumption regime, in which JP-CoMP outperforms CS-CoMP
due to the higher spectrum efficiency. CAH-CoMP outperforms both
CS-CoMP and JP-CoMP while the outperformance of CAH-CoMP is not so
obvious with relative enough fronthaul capacity. It can be observed
that there is significant performance gain of the proposed QAH-CoMP
compared with all the baselines across a wide range of the maximum
power consumption.

\begin{figure}
\centering \vspace*{0pt}
\includegraphics[scale=0.5]{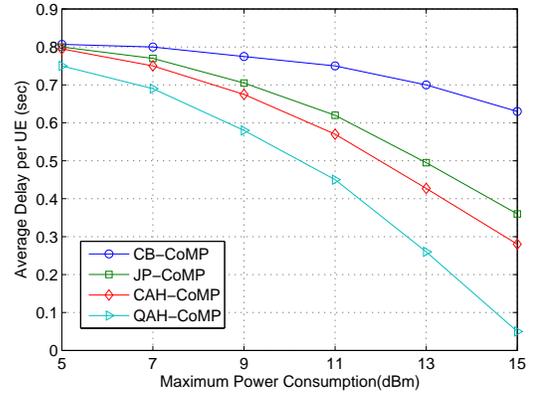}
\setlength{\belowcaptionskip}{-100pt}
\vspace*{-10pt}\caption{Average packet delay vs. maximum transmit
power, the packet arrival rate is ${\lambda}_i$ = 2.5 packets/s, the
maximum fronthaul consumption is $R_i^{max}$ = 30Mbits/s.}
\label{fig3}
\end{figure}

Fig. 5 compares the delay performance of the four schemes with
different maximum fronthaul consumption. The figure depicts the
small fronthaul consumption regime, in which CAH-CoMP clearly
outperforms both CS-CoMP and JP-CoMP, which is contributed by the
flexible adjustment of cooperation level when the fronthaul capacity
is limited. Note that the JP-CoMP has worse delay performance than
CS-CoMP due to limited fronthaul capacity at first but it eventually
gets performance improvement with increasing fronthaul capacity.
Similarly, due to the queue-aware power and rate allocation,
QAH-CoMP substantially outperforms the three baselines.

\begin{figure}
\centering \vspace*{0pt}
\includegraphics[scale=0.5]{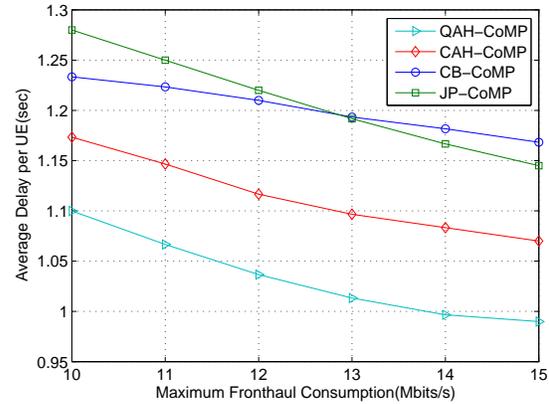}
\setlength{\belowcaptionskip}{-100pt} \vspace*{-2pt}\caption{Average
packet delay vs. maximum fronthaul consumption,the packet arrival
rate is ${\lambda}_i$ = 2.5 packets/s, the maximum transmit power is
$P_i^{max}$ = 10dBm.} \label{fig4}
\end{figure}

Fig. 6 shows the convergence property of the online per-queue
post-decision value functions(w.r.t ${{\mathcal {R}}_n}$, size of
which is equal to mean packet size ${\bar N_i}$) learning algorithm.
For viewing convenience, the post-decision value functions of the
traffic queue maintained for UE 1 is plotted with the increasing of
iteration step. It is significant that the learning converges
extremely close to the final result after 1000 iterations.

\begin{figure}
\centering \vspace*{0pt}
\includegraphics[scale=0.5]{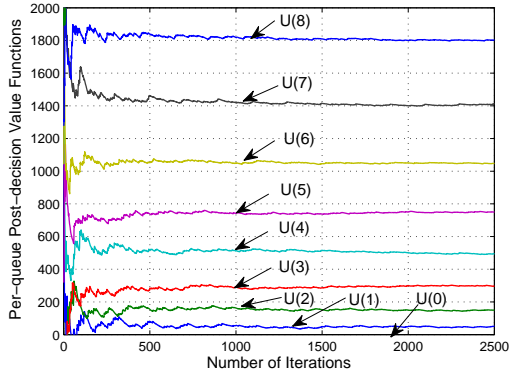}
\setlength{\belowcaptionskip}{-100pt}
\vspace*{-10pt}\caption{Per-queue Post-decision Value Functions}
\label{fig5}
\end{figure}

\section{Summary}

In this paper, an H-CoMP scheme with corresponding precoders and
decorrelators are designed for the downlink fronthaul constrained
C-RANs. Based on the proposed H-CoMP, a low complexity queue-aware
power and rate allocation solution for the delay-sensitive traffic
is then proposed using MDP and stochastic gradient algorithms.
Simulation results show that the C-RANs with H-CoMP achieve more
significant delay performance gains than that with CB-CoMP and
JP-CoMP under the same average power and fronthaul consumption
constraints, where the performance gains largely depend on the
cooperation level of the proposed H-CoMP under limited fronthaul
capacity. Furthermore, compared with the CAH-CoMP, the remarkable
delay performance gain of QAH-CoMP is also validated by the
simulation results, which is contributed by the MDP based dynamic
resource allocation with the consideration of both QSI and imperfect
CSIT. In the future, the theoretical analysis on the delay
performance of the QAH-CoMP still remains to be an open issue, and
the real experiments would be desirable to further demonstrate the
effectiveness and applicability of the QAH-CoMP in fronthaul
constrained C-RANs.

\appendices
\section{Proof of theorem 1}
According to the Proposition 4.6.1 of{\cite{bertsekas}}, the
sufficient condition for optimality of problem 1 is that there
exists unique $(\theta ,\{ U({\bf{S}})\} )$ that satisfies the
following Bellman equation and $U({\bf{S}})$ satisfies the
transversality condition $\mathop {\lim }\limits_{T \to \infty }
\frac{1}{T}{{\mathbb{E}}^\Omega }[U({\bf{S}}(T))|{\bf{S}}(0)] = 0$
for all admissible control policy $\Omega$ and initial state
${\bf{S}}(0)$.
\begin{equation}
\begin{array}{l}
 \theta + U({\bf{S}}) = \mathop {\min }\limits_{\Omega ({\bf{\hat S}})} [g({\bf{\beta }},{\bf{\gamma }},{\bf{S}},\Omega ({\bf{\hat S}})) + \sum\limits_{{\bf{S}}'} {\Pr [{\bf{S}}|{\bf{S}},\Omega ({\bf{\hat S}})]} U({\bf{S}}')] \\
 = \mathop {\min }\limits_{\Omega ({\bf{\hat S}})} [g({\bf{\beta }},{\bf{\gamma }},{\bf{S}},\Omega ({\bf{\hat S}})) + \sum\limits_{{\bf{Q}}'} {\sum\limits_{{\bf{\hat H}}'} {\sum\limits_{{\bf{H}}'} {\Pr [{\bf{Q}}'|{\bf{Q}},{\bf{\hat H}},{\bf{H}},\Omega ({\bf{\hat S}})]}}} \\{{\Pr [{\bf{\hat H}}',{\bf{H}}']U({\bf{S}}')]} } \\
 \end{array}
\end{equation}

Then taking expectation w.r.t. ${\bf{\hat H}}',{\bf{H}}'$ on both
side of the above equation, we have

\begin{equation}
\theta \! + \!U({\bf{Q}})\! =\! \mathop {\min }\limits_{\Omega
({\bf{Q}})} [g({\bf{\beta }},{\bf{\gamma }},{\bf{Q}},\Omega
({\bf{Q}})) + \sum\limits_{{\bf{Q}}'} {\Pr
[{\bf{Q}}'|{\bf{Q}},\Omega ({\bf{Q}})]U({\bf{Q}}')} ]
\end{equation}
where $U({\bf{Q}}) = {\mathbb{E}}[U({\bf{S}})|{\bf{Q}}] =
\sum\limits_{{\bf{\hat H}}} {\sum\limits_{\bf{H}} {\Pr [{\bf{\hat
H}},{\bf{H}}]U({\bf{S}})} }$ and $\Pr [{\bf{Q}}'|{\bf{Q}},\Omega
({\bf{\hat S}})] = {\mathbb{E}}[\Pr [{\bf{Q}}'|{\bf{Q}},{\bf{\hat
H}},{\bf{H}},\Omega ({\bf{\hat S}})]|{\bf{Q}}]$. Since here we
defined the post-decision State ${\bf{\tilde Q}}$, where ${\bf{Q}} =
\min \{ {\bf{\tilde Q}} + {\bf{A}},{N_Q}\}$, the equivalent Bellman
equation can be transformed as the equivalent Bellman equation
(\ref{bellman}) in theorem 1.

\section{Proof of lemma 1}

With the perfect CSIT, there is no interference with the H-CoMP
scheme for C-RAN, which means that the queue dynamics for every UE
are completely decoupled. Detailedly speaking, ${\tilde Q_i} = {Q_i}
- {G_i}({\bf{\hat H}},{\Omega _i}({\bf{\hat S}}))$ is independent of
${Q_j}$ and ${\Omega _j}({\bf{\hat S}})$ for all $j \ne i$ due to
the nonexistence of interference, therefore we have $\Pr
[{\bf{\tilde Q}}'|{\bf{Q}},\Omega ({\bf{Q}})] = \prod\nolimits_{i
\in \mathcal {M}} {\Pr [{{\tilde Q}_i}'|{\bf{Q}},\Omega ({\bf{Q}})]}
$ and $\Pr [{\tilde Q_i}'|{\bf{Q}},\Omega ({\bf{Q}})] = \Pr [{\tilde
Q_i}'|{Q_i},{\Omega _i}({\bf{Q}})] = \Pr [{\tilde
Q_i}'|{Q_i},{\Omega _i}({Q_i})]$. Suppose $U({\bf{\tilde Q}}) =
\sum\nolimits_{i \in \mathcal {M}} {{U_i}({{\tilde Q}_i})}$, by the
relationship between the joint distribution and the marginal
distribution, we have
\begin{equation}
\begin{array}{l}
 {~~~}\sum\nolimits_{{\bf{\tilde Q}}'} {\Pr [{\bf{\tilde Q}}'|{\bf{Q}},\Omega ({\bf{Q}})]U({\bf{\tilde Q}}')} \\
 = \sum\nolimits_{{\bf{\tilde Q}}'} {\Pr [{\bf{\tilde Q}}'|{\bf{Q}},\Omega ({\bf{Q}})]\sum\nolimits_{i \in M} {{U_i}({{\tilde Q}_i}')} } \\
 = \sum\nolimits_{i \in \mathcal {M}} {\sum\nolimits_{{{\tilde Q}_i}'} {\Pr [{{\tilde Q}_i}'|{\bf{Q}},\Omega ({\bf{Q}})]} } {U_i}({{\tilde Q}_i}') \\
 = \sum\nolimits_{i \in \mathcal {M}} {\sum\nolimits_{{{\tilde Q}_i}'}}
 {{\Pr [{{\tilde Q}_i}'|{Q_i},{\Omega _i}({Q_i})]{U_i}({{\tilde Q}_i}')} } \\
 \end{array}
\end{equation}

It is obvious that $g({\bf{\beta }},{\bf{\gamma }},{\bf{Q}},\Omega
(\Theta )) = \sum\nolimits_{i \in M} {{g_i}({\beta _i},{{\bf{\gamma
}}_i},{Q_i},{\Omega _i}({Q_i}))}$. Suppose $\theta =
\sum\nolimits_{i \in \mathcal {M}} {{\theta _i}}$, then the
equivalent Bellman equation in (\ref{bellman}) can be transformed as

\begin{equation}
\begin{array}{l}
 {~~~~}\sum\nolimits_{i \in \mathcal {M}} {{\theta _i}} + \sum\nolimits_{i \in \mathcal {M}} {{U_i}({{\tilde Q}_i})} \\
 = \sum\nolimits_{\bf{A}} {\Pr ({\bf{A}})} \mathop {\min }\limits_{\Omega (Q)} \sum\nolimits_{i \in M} {[{g_i}({\beta _i},{{\bf{\gamma }}_i},{Q_i},{\Omega _i}({Q_i}))}\\~~~{ + \sum\nolimits_{{{\tilde Q}_i}'} {\Pr [{{\tilde Q}_i}'|{Q_i},{\Omega _i}({Q_i})]{U_i}({{\tilde Q}_i}')]} } \\
 \mathop = \limits^{(a)} \sum\nolimits_{i \in \mathcal {M}} {\sum\nolimits_{{A_i}} {\Pr ({A_i})} } \mathop {\min }\limits_{{\Omega _i}({Q_i})} [{g_i}({\beta _i},{{\bf{\gamma }}_i},{Q_i},{\Omega _i}({Q_i})) \\~~~+ \sum\nolimits_{{{\tilde Q}_i}'} {\Pr [{{\tilde Q}_i}'|{Q_i},{\Omega _i}({Q_i})]{U_i}({{\tilde Q}_i}')]} \\
 \end{array}
\end{equation}
where (a) is due to the independent assumption of the new arrival
process $A_i(t)$ w.r.t $i$. Therefore, we can have the per-queue
fixed point Bellman equation in (\ref{perqueuebellman}) for each UE
from the above equation.

\begin{IEEEbiography}[{\includegraphics[width=1in,height=1.2in]{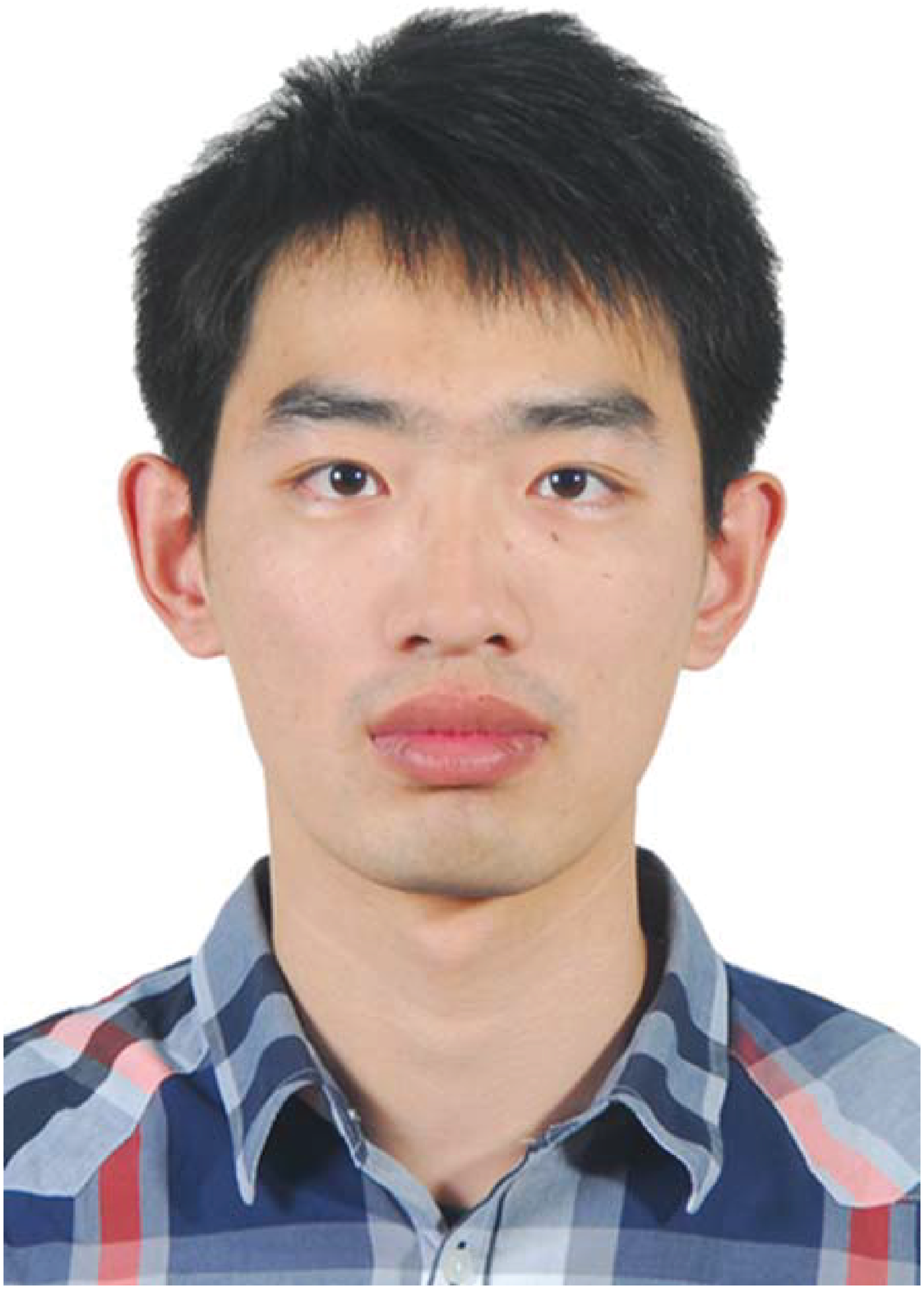}}]{Jian Li}
received his B.E. degree from Nanjing University of Posts and
Telecommunications, Nanjing, China, in 2010. He is currently
pursuing his Ph.D. degree at the key laboratory of universal
wireless communication (Ministry of Education) in Beijing University
of Posts and Telecommunications (BUPT), Beijing, China. His current
research interests include delay-aware cross-layer radio resource
optimization for heterogeneous networks and heterogeneous cloud
radio access networks.
\end{IEEEbiography}
\begin{IEEEbiography}[{\includegraphics[width=1in,height=1.25in]{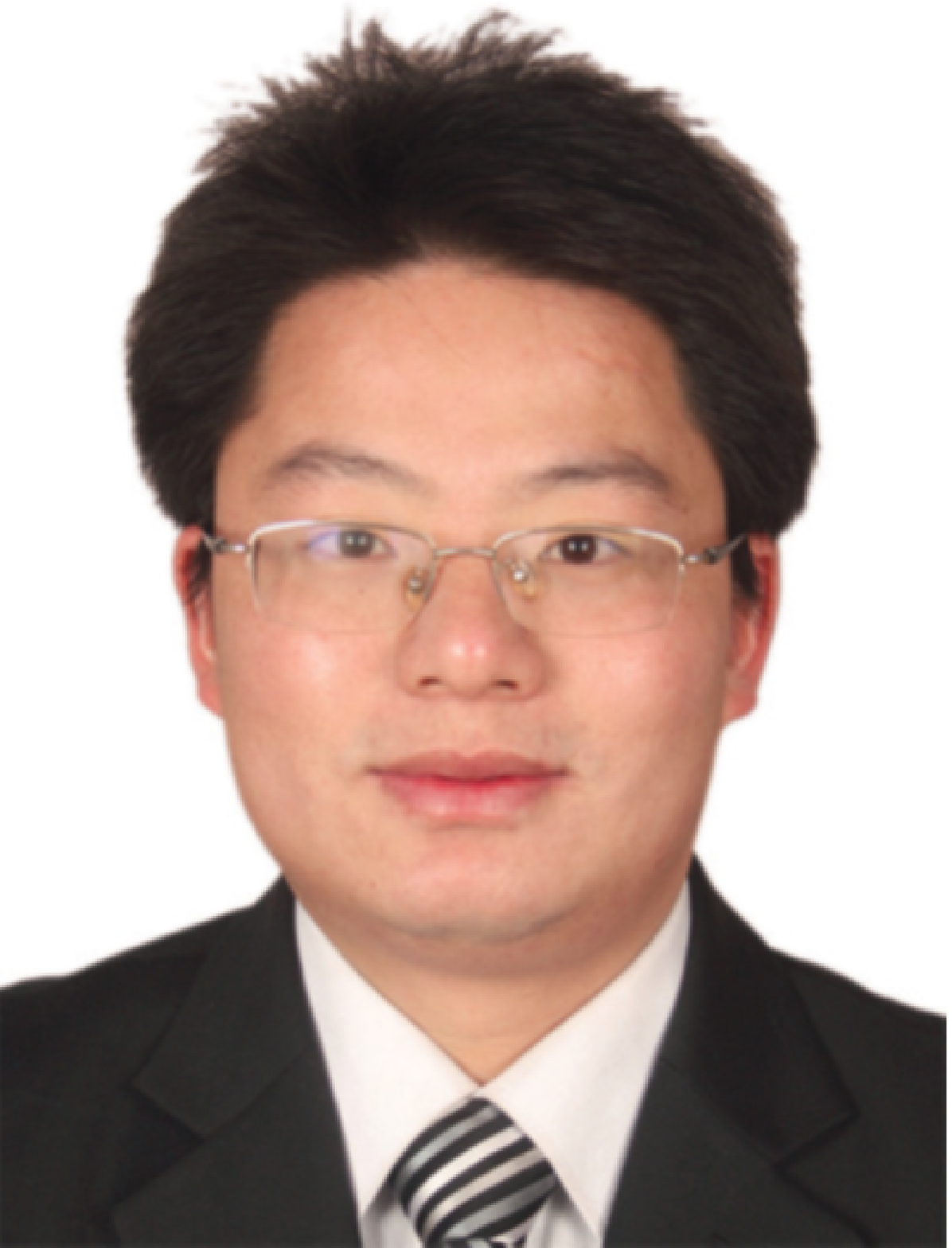}}]{Mugen Peng}
(M'05--SM'11) received the B.E. degree in Electronics Engineering
from Nanjing University of Posts \& Telecommunications, China in
2000 and a PhD degree in Communication and Information System from
the Beijing University of Posts \& Telecommunications (BUPT), China
in 2005. After the PhD graduation, he joined in BUPT, and has become
a full professor with the school of information and communication
engineering in BUPT since Oct. 2012. During 2014, he is also an
academic visiting fellow in Princeton University, USA. He is leading
a research group focusing on wireless transmission and networking
technologies in the Key Laboratory of Universal Wireless
Communications (Ministry of Education) at BUPT, China. His main
research areas include wireless communication theory, radio signal
processing and convex optimizations, with particular interests in
cooperative communication, radio network coding, self-organization
networking, heterogeneous networking, and cloud communication. He
has authored/coauthored over 40 refereed IEEE journal papers and
over 200 conference proceeding papers.

Dr. Peng is currently on the Editorial/Associate Editorial Board of
IEEE Communications Magazine, IEEE Access, International Journal of
Antennas and Propagation (IJAP), China Communication, and
International Journal of Communication Systems (IJCS). He has been
the guest leading editor for the special issues in IEEE Wireless
Communications, IJAP and the International Journal of Distributed
Sensor Networks (IJDSN). He is serving as the track co-chair or
workshop co-chair for GameNets 2014, So-HetNets in IEEE WCNC 2014,
SON-HetNet 2013 in IEEE PIMRC 2013, WCSP 2013, etc. Dr. Peng was
honored with the Best Paper Award in CIT 2014, ICCTA 2011, IC-BNMT
2010, and IET CCWMC 2009. He was awarded the First Grade Award of
Technological Invention Award in Ministry of Education of China for
his excellent research work on the hierarchical cooperative
communication theory and technologies, and the Second Grade Award of
Scientific \& Technical Progress from China Institute of
Communications for his excellent research work on the co-existence
of multi-radio access networks and the 3G spectrum management in
China.
\end{IEEEbiography}
\begin{IEEEbiography}[{\includegraphics[width=1in,height=1.25in]{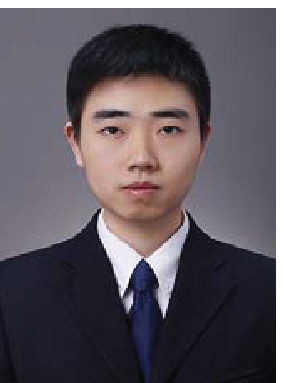}}]{Aolin Cheng}
received his B.E. degree in Electronic Information Science and
Technology from Beijing University of Posts and Telecommunications,
Beijing, China, in 2012. He is currently pursuing his M.E. degree at
the laboratory of universal wireless communication (Ministry of
Education) in Beijing University of Posts and Telecommunications
(BUPT), Beijing, China. His current research interests include
delay-aware cross-layer radio resource optimization for
heterogeneous networks (HetNets), as well as stochastic
approximation and Markov decision process.
\end{IEEEbiography}
\begin{IEEEbiography}[{\includegraphics[width=1in,height=1.25in]{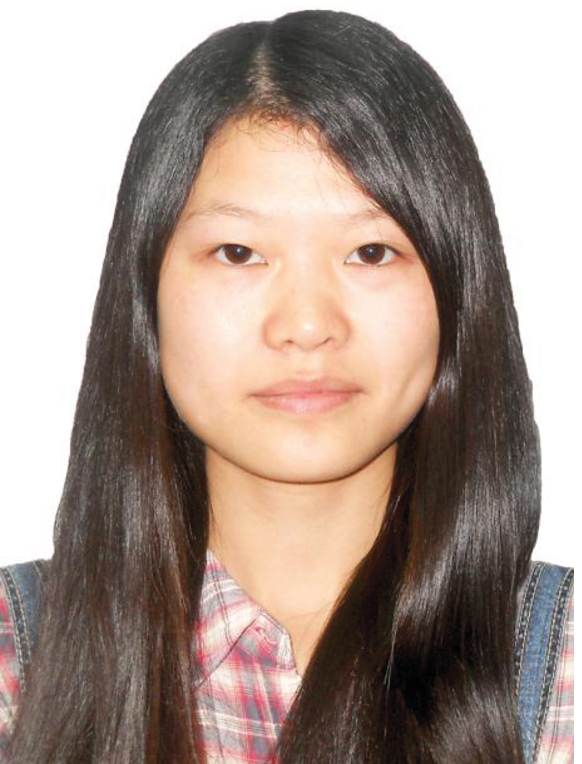}}]{Yuling Yu}
received the B.E. degree in Communication Engineering from Wuhan
University of Technology, Wuhan, China, in 2013. She is currently
pursuing her M.E. degree at the key laboratory of universal wireless
communication (Ministry of Education) in Beijing University of Posts
and Telecommunications (BUPT), Beijing, China. Her research focuses
on delay-aware cross-layer resource optimization for heterogeneous
cloud radio access networks (H-CRANs), as well as Lyapunov
optimization.
\end{IEEEbiography}
\begin{IEEEbiography}[{\includegraphics[width=1in,height=1.25in]{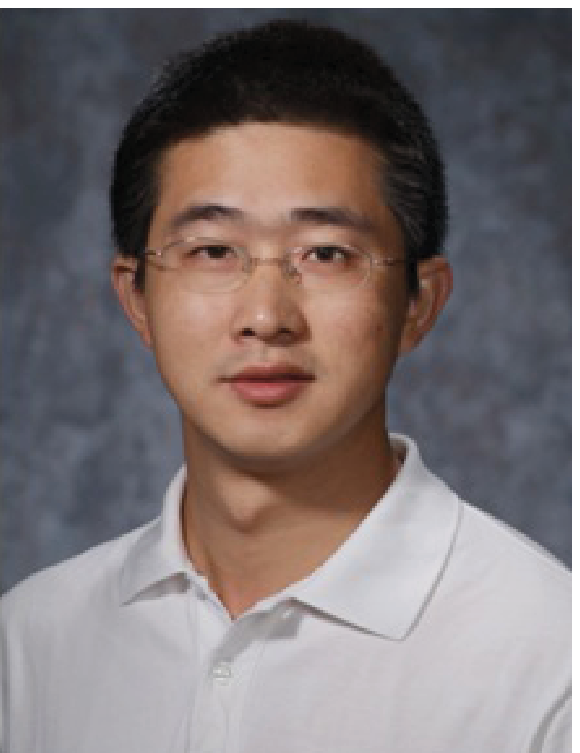}}]{Chonggang Wang}
(SM'09) received the Ph.D. degree from Beijing University of Posts
and Telecommunications (BUPT) in 2002. He is currently a Member of
Technical Staff with InterDigital Communications. His R\&D focuses
on: Internet of Things (IoT), Machine-to-Machine (M2M)
communications, Heterogeneous Networks, and Future Internet,
including technology development and standardization. He
(co-)authored more than 100 journal/conference articles and book
chapters. He is on the editorial board for several journals
including IEEE Communications Magazine, IEEE Wireless Communications
Magazine and IEEE Transactions on Network and Service Management. He
is the founding Editor-in-Chief of IEEE Internet of Things Journal.
He is serving and served in the organization committee for
conferences/workshops including IEEE WCNC 2013, IEEE INFOCOM 2012,
IEEE Globecom 2010-2012, IEEE CCNC 2012, and IEEE SmartGridComm
2012. He has also served as a TPC member for numerous conferences
such as IEEE ICNP (2010-2011), IEEE INFOCOM (2008-2014), IEEE
GLOBECOM (2006-2014), IEEE ICC (2007-2013), IEEE WCNC (2008-2012)
and IEEE PIMRC (2012-2013). He is a co-recipient of National Award
for Science and Technology Achievement in Telecommunications in 2004
on IP QoS from China Institute of Communications. He received
Outstanding Leadership Award from IEEE GLOBECOM 2010 and
InterDigital's 2012 and 2013 Innovation Award. He served as an NSF
panelist in wireless networks in 2012. He is a senior member of the
IEEE and the vice-chair of IEEE ComSoc Multimedia Technical
Committee (MMTC) (2012-2014).
\end{IEEEbiography}


\begin{thebibliography}{99}

\bibitem{bib:tdscdma}
M. Peng, Y. Li, J. Jiang, J. Li, and C. Wang, \textquotedblleft
Heterogeneous Cloud Radio Access Networks: A New Perspective for
Enhancing Spectral and Energy
Efficiencies,\textquotedblright~\emph{IEEE Wireless Commun.}, Dec.
2014

\bibitem{cran}
M. Peng, S. Yan, and H. V. Poor, \textquotedblleft Ergodic capacity
analysis of remote radio head associations in cloud radio access
networks,\textquotedblright~\emph{IEEE Wireless Commun. Let.}, vol.
3, no. 4, pp. 365--368, Aug. 2014.

\bibitem{crancluster}
D. Matsuo, \emph{et al.}, \textquotedblleft Shared remote radio head
architecture to realize semi-dynamic clustering in CoMP cellular
networks,\textquotedblright~in \emph{Proc. IEEE Global Commun.
Conf.}, Anaheim, USA, Dec. 2012, pp. 1145-1149.

\bibitem{comp}
M. Peng, Y. Liu, D. Wei, W. Wang, and H. Chen, \textquotedblleft
Hierarchical cooperative relay based heterogeneous
networks,\textquotedblright~in \emph{IEEE Wireless Commun.}, vol.
18, no. 3, pp. 48--56, Jun. 2011.

\bibitem{JP}
D. Gesbert, \emph{et al.}, \textquotedblleft Multicell MIMO
cooperative networks: a new look at
interference,\textquotedblright~\emph{IEEE J. Sel. Areas Commun.},
vol. 28, no. 9, pp. 1380-1408, Dec. 2010.

\bibitem{CS}
T. Zhou, M. Peng, W. Wang, and H. Chen, \textquotedblleft
Low-complexity coordinated beamforming for downlink multi-cell
SDMA/OFDM system,\textquotedblright~\emph{IEEE Trans. Veh. Tech.},
vol. 62, no. 1, pp. 247--255, Jan. 2013.

\bibitem{dynamicclustering}
A. Papadogiannis, D. Gesbert, and E. Hardouin, \textquotedblleft A
dynamic clustering approach in wireless networks with multi-cell
cooperative processing,\textquotedblright~ in \emph{Proc. IEEE Int.
Conf. Commun.}, Beijing, China, May 2008, pp. 4033-4037.

\bibitem{directional}
A. Chowdhery, W. Yu, and J. M. Cioffi,``Cooperative wireless
multicell OFDMA network with backhaul capacity constraints," in
\emph{Proc. IEEE Int. Conf. Commun.}, Kyoto, Japan, Jun. 2011, pp.
1-6.

\bibitem{tonyquek}
J. Zhao, T.Q.S. Quek, Z. Lei, \textquotedblleft Coordinated
multipoint transmission with limited backhaul data
transfer,\textquotedblright~\emph{IEEE Trans. Wireless Commun.},
vol. 12, no. 6, pp. 2762-2775, Jun. 2013.

\bibitem{asymmetricJP}
F. Zhuang and V. K. N. Lau, ``Backhaul limited asymmetric
cooperation for MIMO cellular networks via semidefinite
relaxation,''\emph{IEEE Trans. Signal Process.}, vol. 62, no. 3, pp.
684-693, Feb. 2014.

\bibitem{streamsplitting}
R. Zakhour, and D. Gesbert, \textquotedblleft Optimized data sharing
in multicell MIMO with finite backhaul
capacity,\textquotedblright~\emph{IEEE Trans. Signal Process.}, vol.
59, no. 12, pp. 6102-6111, Dec. 2011.

\bibitem{r1}
Q. Zhang, C. Yang, A.F. Molisch, \textquotedblleft Downlink base
station cooperative transmission under limited-capacity
backhaul,\textquotedblright~ \emph{IEEE Trans. Wireless Commun.},
vol. 12, no. 8, pp. 3746-3759, Aug. 2013.

\bibitem{r2}
L. Su, C. Yang, S. Han, \textquotedblleft The value of channel
prediction in CoMP systems with large backhaul
latency,\textquotedblright~\emph{IEEE Trans. Commun.}, vol. 61, no.
11, pp. 4577-4590, Nov. 2013.

\bibitem{r3}
J. Yu, Q. Zhang, P. Chen, B. Cao and Y. Zhang, \textquotedblleft
Dynamic joint transmission for downlink scheduling scheme in
clustered CoMP cellular,\textquotedblright~in \emph{Proc. IEEE/CIC
Int. Conf. Commun. China}, Xi'an, China, Aug. 2013, pp. 645-650.

\bibitem{queuecluster}
Y. Cui, Q. Huang, and V. K. N. Lau, \textquotedblleft Queue-aware
dynamic clustering and power allocation for networkMIMO systems via
distributed stochastic learning,\textquotedblright~ \emph{IEEE
Trans. Signal Process.}, vol. 59, no. 3, pp. 1229-1238, Mar. 2011.

\bibitem{dtx}
Y. Cui, K. N. Lau, and Y. Wu, \textquotedblleft Delay-aware BS
discontinuous transmission control and user scheduling for energy
harvesting downlink coordinated MIMO systems,\textquotedblright~
\emph{IEEE Trans. Signal Process.}, vol. 60, no. 7, pp. 3786-3795,
Jul. 2012.

\bibitem{lyapunov}
H. Ju, B. Liang, J. Li and X. Yang, ``Dynamic joint resource
optimization for LTE-Advanced relay networks,'' IEEE Trans. Wireless
Commun. vol. 12, no. 11, pp. 5668-5678, Nov. 2013.

\bibitem{survey}
Y. Cui, V. K. N. Lau, R. Wang, H. Huang, and S.
Zhang,\textquotedblleft A survey on delay-aware resource control for
wireless systems~-~large deviation theory, stochastic Lyapunov
drift, and distributed stochastic
learning,\textquotedblright~\emph{IEEE Trans. Inf. Theory}, vol. 58,
no. 3, pp. 1677-1701, Mar. 2012.

\bibitem{imperfectCSIT}
P. Kyritsi, R. Valenzuela, and D. Cox, \textquotedblleft Channel and
capacity estimation errors,\textquotedblright~\emph{IEEE Comm.
Letters}, vol. 6, no. 12, pp. 517-519, Dec. 2002.

\bibitem{bertsekas}
D. P. Bertsekas, \emph{Dynamic Programming and Optimal Control, vol
II}, Massachusetts: Athena Scientific, 2007.

\bibitem{linearapprox}
W. B. Powell, \emph{Approximate Dynamic Programming: Solving the
Curses of Dimensionality}, London, U.K.:Wiley-Interscience, 2007.

\bibitem{stochastic}
S. Boyd and A. Mutapcic, \emph{Stochastic Subgradient Methods},
Notes for EE364b, Stanford, CA: Stanford Univ., 2008.

\bibitem{xrcao}
X. Cao, \emph{Stochastic Learning and Optimimization: A
Sensitivity-Based Approach}, New York: Springer Press, 2008.

\bibitem{convergence}
V. S. Borkar, and S. P. Meyn, \textquotedblleft The ode method for
convergence of stochastic approximation and reinforcement learning
algorithms,\textquotedblright~\emph{SIAM J. on Control and
Optimization}, vol. 11, no. 38, pp. 447-469, 2000.

\bibitem{onlinelearning}
V. S. Borkar, \emph{Stochastic Approximation: A Dynamical Systems
Viewpoint}, Cambridge, U.K.: Cambridge Univ. Press, 2008.

\end{thebibliography}
\end{document}